\newtheorem{theorem}{Theorem}
\newtheorem{cor}[theorem]{Corollary}
\newtheorem{lemma}[theorem]{Lemma}
\newtheorem{prop}[theorem]{Proposition}
\newtheorem*{remark*}{Remark}
\newtheorem*{claim*}{Claim}
\newtheorem*{theorem*}{Theorem}
\newcommand{\re}{\Re }
\newcommand{\Z}{\mathbb{Z}}
\newcommand{\N}{\mathbb{N}}
\newcommand{\Q}{\mathbb{Q}}
\newcommand{\R}{\mathbb{R}}
\newcommand{\C}{\mathbb{C}}
\newcommand\Arg{\hbox{Arg}}
\newcommand{\E}{\displaystyle \mathop{\mathbb E}}
\newcommand{\EE}{\mathop{\mathbb E}}
\newcommand\Newt{\hbox{Newt}}
\begin{document}
\title{On the distribution of runners on a circle}

\author{Pavel Hrube\v{s}\footnote{Institute of Mathematics of ASCR, Prague, pahrubes@gmail.com.  Supported by the GACR grant 19-27871X and 19-05497S.}}

\maketitle

\begin{abstract}  Consider $n$ runners running on a circular track of unit length with constant speeds such that $k$ of the speeds are distinct. We show that, at some time, there will exist a sector $S$ which contains at least  $|S|n+ \Omega(\sqrt{k})$ runners. 
The bound is asymptotically tight up to a logarithmic factor.  
The result can be generalized as follows. Let $f(x,y)$ be a complex bivariate polynomial whose Newton polytope has $k$ vertices. Then there exists $a\in \C\setminus\{0\}$ 
and a complex sector $S=\{re^{\imath \theta}: r>0, \alpha\leq \theta \leq \beta\}$ such that the univariate polynomial $f(x,a)$ contains at least $\frac{\beta-\alpha}{2\pi}n+\Omega(\sqrt{k})$ non-zero roots in $S$ 
(where $n$ is the total number of such roots and $0\leq (\beta-\alpha)\leq 2\pi$). 
This shows that the Real $\tau$-Conjecture of Koiran \cite{Koiran} implies the conjecture on Newton polytopes of Koiran et al. \cite{KoiranNewt}.   
\end{abstract}

\section{Introduction}\label{sec:intro}
Consider $n$ runners running on a circular track with constant and distinct speeds. Does it have to be the case that, at some point in time, they concentrate in some non-trivial sector?
If the speeds are sufficiently independent, Kronecker's approximation theorem \cite{Kronecker} implies that the runners will all meet in an arbitrarily small sector. On the other hand, if the speeds are $1,2,\dots,n$, it is easy to set the starting positions so that the runners never meet in a common half-circle, or any constant fraction of the circle. 
A similar construction  can also be deduced from the approximation theorem of Dirichlet \cite{Dirichlet}. Furthermore, if the starting positions are chosen randomly, the runners will be almost uniformly distributed at any point in time (see Section \ref{sec:run2} below).  Nevertheless, we will show that some deviation from uniformity must occur:

\begin{theorem*} Assume that $n$ runners run on a circle of unit length with constant speeds such that $k$ of the speeds are distinct. Then there exists a time and a sector $S$ such that $S$ contains at least $|S|n+ c\sqrt{k}$ runners, where $c>0$ is an absolute constant. 
\end{theorem*}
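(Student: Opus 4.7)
The plan is to use Fourier analysis on the circle $\R/\Z$. For each nonzero integer $m$ and time $t$, introduce the exponential sum
\[
F_m(t) \;=\; \sum_{j=1}^n e^{2\pi i m(x_j+v_jt)} \;=\; \sum_{s=1}^k A_s^{(m)}\, e^{2\pi i m v_s t}, \qquad A_s^{(m)}=\!\!\sum_{j:\, v_j=v_s}\!\! e^{2\pi i m x_j},
\]
where $x_j\in[0,1)$ is the starting position of the $j$-th runner, $v_1,\dots,v_k$ are the $k$ distinct speeds, and the second equality groups the runners by speed (so $\sum_s n_s=n$ where $n_s:=|\{j:v_j=v_s\}|$). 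The strategy is to lower bound $\sup_{t,I}D(t,I)$, with $D(t,I)=|\{j:p_j(t)\in I\}|-|I|n$ and $p_j(t)=x_j+v_jt$, by first relating the Fourier coefficients $F_m(t)$ to $D$, and then using time-averaging to show that each of the $k$ speed classes contributes a constant amount to the total.

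For a fixed time $t$, let $G(t,x)=|\{j:p_j(t)\le x\}|-nx$; then $G(t,0)=G(t,1)=0$ and $\sup_x|G(t,x)|\le\sup_I|D(t,I)|$. Integration by parts gives $F_m(t)=-2\pi i m\int_0^1 G(t,x)e^{2\pi i m x}\,dx$, so Parseval yields
\[
\sum_{m\ne 0}\frac{|F_m(t)|^2}{(2\pi m)^2} \;\le\; \|G(t,\cdot)\|_2^2 \;\le\; \sup_I D(t,I)^2.
\]
On the other hand, since the $v_s$ are distinct, $|F_m(t)|^2$ is an almost-periodic function whose only constant term (in the expansion $\sum_{s,s'}A_s^{(m)}\overline{A_{s'}^{(m)}}e^{2\pi i m(v_s-v_{s'})t}$) comes from $s=s'$, so $\lim_{T\to\infty}\frac1T\int_0^T|F_m(t)|^2\,dt=\sum_s|A_s^{(m)}|^2$. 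Applying $\sup_t\ge\operatorname{avg}_t$ to the Parseval bound (and interchanging Cesàro limit with the $m$-sum, justified by $|F_m|\le n$) gives
\[
\sup_{t,I} D(t,I)^2 \;\ge\; \sum_{s=1}^k\sum_{m\ne 0}\frac{|A_s^{(m)}|^2}{(2\pi m)^2} \;=\; \sum_{s=1}^k \operatorname{Var}(G_s),
\]
where $G_s(x)=|\{j:v_j=v_s,\ x_j\le x\}|-n_s x$ is the cumulative discrepancy of speed class $s$ and the last equality is Parseval applied to $G_s$.

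Everything now reduces to the per-clump inequality $\operatorname{Var}(G_s)\ge 1/12$, which is the main step. To prove it, write $G_s-\overline{G_s}=-f$ with
\[
f(x) \;=\; \sum_{j:\,v_j=v_s}\!\!\!B_1\bigl(\{x-x_j\}\bigr), \qquad B_1(t)=t-\tfrac12.
\]
Each summand has slope $1$ away from $x_j$ and a unit downward jump at $x_j$, so $f$ is piecewise linear of slope $n_s$ with exactly $n_s$ unit downward jumps. Let $a_i=f(x_i^+)$ be the right-hand values at the jumps (in cyclic order) and $\ell_i$ the length of the subsequent linear piece; since $a_{i+1}=a_i+n_s\ell_i-1$, the substitution $v=a_i+n_s(x-x_i)$ gives $\int_{x_i}^{x_{i+1}} f^2\,dx=\frac{1}{3n_s}\bigl[(a_{i+1}+1)^3-a_i^3\bigr]$, and summing and re-indexing cyclically yields
\[
\|f\|_2^2 \;=\; \frac{1}{3n_s}\sum_{i=1}^{n_s}\bigl[(a_i+1)^3-a_i^3\bigr] \;=\; \frac{1}{3} + \frac{1}{n_s}\sum_{i=1}^{n_s} a_i(a_i+1) \;\ge\; \frac{1}{3}-\frac{1}{4}=\frac{1}{12},
\]
using $a(a+1)=(a+\tfrac12)^2-\tfrac14\ge-\tfrac14$ (with equality when the $x_j$ in clump $s$ are equally spaced). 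Plugging this into the previous display gives $\sup_{t,I}D(t,I)^2\ge k/12$, proving the theorem with $c=1/\sqrt{12}$. The main obstacle is precisely this variance lemma: the Fourier and averaging steps are routine, but one must extract a genuine per-clump constant contribution so that the bound scales as $\sqrt k$ rather than being washed out by a single near-uniform speed class.
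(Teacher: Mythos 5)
Your proof is correct, and it reaches the paper's exact constant $\sqrt{k/12}$ by a genuinely different route, even though the two arguments share the same skeleton (split the runners into the $k$ speed classes, show the classes decorrelate on average, extract $1/12$ per class). The paper stays entirely with indicator functions of sectors: it averages $(N-\gamma n)^2$ over a random anchor $\alpha$, aperture $\gamma$ and time $t\in[0,1]$, after reducing real speeds to integers via Dirichlet approximation; cross-class terms vanish by a direct integral computation (Lemma 1) and the per-class $1/12$ comes from an integrality argument in the aperture variable (Lemma 2). You instead pass to exponential sums, use partial summation and Parseval in the space variable to bound $\sup_I D(t,I)^2\ge\sum_{m\neq 0}|F_m(t)|^2/(2\pi m)^2$, kill the cross-class terms by an infinite-horizon Ces\`aro average in $t$ (which handles arbitrary real speeds directly, no Dirichlet step), and prove the per-class bound $\operatorname{Var}(G_s)\ge 1/12$ by an explicit piecewise-linear extremal computation. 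Notably, the paper remarks that the exponential-sum approach ``seems hard to implement'' for non-distinct speeds because bounding $\max_t|F_1(t)|$ leads to Tur\'an-type power-sum problems; your $L^2$ averaging in $t$ of all harmonics sidesteps exactly that obstacle, and as a bonus yields the stronger structural inequality $\sup_{t,I}D(t,I)^2\ge\sum_s\operatorname{Var}(G_s)$, which can far exceed $k/12$. What the paper's route buys in exchange is elementarity (no Fourier analysis), a witness time coming from a positive-measure event rather than a supremum over an infinite time horizon (immaterial for the stated constant $c$, since any time with deviation $\ge\sqrt{k/12}-\epsilon$ suffices), and the sharper aperture-specific bound of its Theorem 2 in the distinct-speed case. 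Two small points to tidy up: your jump computation assumes the starting positions within a class are distinct (unit jumps); coincident positions give integer jumps $c_i\ge 1$ and the same telescoping yields $\frac{1}{3n_s}\sum_i\bigl((a_i+c_i)^3-a_i^3\bigr)\ge\frac{1}{12 n_s}\sum_i c_i=\frac{1}{12}$, or one can invoke continuity of $\|f\|_2^2$ in the $x_j$. Also, you prove the two-sided deviation $|N_S-|S|n|\ge c\sqrt{k}$; to get the stated one-sided excess, pass to the complementary sector when the deviation is negative, exactly as the paper notes in a footnote.
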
         

Observe that $|S|n$ is the expected number of runners in $S$, had they been distributed uniformly, and the theorem asserts that at some time, the true distribution of runners is $c\sqrt{k}$-far from uniform. We will also show that the bound is asymptotically tight up to a logarithmic factor.

The problem of runners has an interesting application to distribution of roots of complex polynomials. Take a bivariate polynomial $f(x,y)$ such that its Newton polytope has $k$ vertices:  $\sum_{i=0}^{k-1} x^i y^{i^2}$ is an iconic example. 
Given $a\in \C\setminus\{0\}$, consider the univariate polynomial $f(x,a)$. Then the Theorem can be generalized as follows: there exists $a\in \C\setminus\{0\}$ so that if $r_1e^{2\imath\pi\alpha_1},\dots, r_ne^{2\imath\pi \alpha_n}$ are the non-zero roots of $f(x,a)$, the distribution of $\alpha_1,\dots, \alpha_n$ is $c\sqrt{k}$-far from uniform.

In the iconic example, a stronger and simpler result follows from a theorem of  Hutchinson \cite{Hutchinson}, see Section \ref{sec:statement} for a detailed discussion. This also gives one motivation for this problem.   A different motivation comes from the complexity of  algebraic computations. In \cite{Koiran},  Koiran has conjectured the following: if a \emph{univariate} polynomial $f(x)$ is sufficiently easy to compute then $f(x)$ has a small number of distinct real roots. This is called the Real $\tau$-Conjecture; it is rooted partly in Valiant's VP vs. VNP problem \cite{compValiant,BurgHabil}, partly in the $\tau$-Conjecture of Shub and Smale \cite{Smale}. Later, Koiran et al. \cite{KoiranNewt} conjectured that a similar statement holds for a \emph{bivariate} polynomial and the number of vertices of its Newton polytope. While the two conjectures seem related, and they share the crucial consequence that $\hbox{VP} \not= \hbox{VNP}$, no implication between them was previously known.  We can now conclude that in fact, the Real $\tau$-Conjecture implies the conjecture on Newton polytopes.

\section{Statement and discussion of main results} \label{sec:statement}
We first give the usual definition of the \emph{discrepancy} of a sequence. For $r_1,\dots, r_n \in [0,1]$,  
\[ D(r_1,\dots, r_n):= \sup_{0\leq a\leq b \leq 1} \frac{1}{n} | N_{a,b}(r_1,\dots, r_n) - n(b-a) |\,,\] 
where $N_{a,b}:= |\{i : r_i \in [a,b]  \}| $ is the number of $r_i$'s in $[a,b]$. 
For general $r_1,\dots, r_n\in \R$, we let $D(r_1,\dots, r_n):= D(\{r_1\},\dots, \{r_n\})$, 
where $\{r\}:= r- \lfloor r\rfloor$
is the fractional part of $r$. In our setting, the normalization factor $\frac{1}{n}$ in $D$ is  rather inconvenient, and we define the \emph{bias} of $r_1,\dots, r_n$ as
\[B(r_1,\dots, r_n):= nD(r_1,\dots, r_n)\,.\]
Our main theorem about runners can be formally restated\footnote{In the Introduction, we asserted that there is a sector which contains \emph{more} than the expected number of runners, whereas here we claim the existence of a sector containing more or \emph{less} than the expected number. But if a sector contains few runners, its complement must contain many; and we can keep it closed by enlarging it by $\epsilon$. } as follows:

\begin{theorem}\label{thm:run0} Let $s_1,\dots, s_n \in [0,1)$. Let $v_1,\dots, v_n\in \R$ and $k:= |\{v_1,\dots,v_n\}|$ be the number of distinct $v_i$'s.
Then there exists $t\in \R$ such that $B(s_1+v_1t,\dots, s_n+v_nt)\geq \sqrt{k/12}$. 
\end{theorem}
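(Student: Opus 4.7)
The plan is to lower-bound $B(t)^2$ by a Fourier-analytic quantity in the exponential sums $S_m(t):=\sum_{j=1}^n e^{2\pi\imath m(s_j + v_j t)}$, take a long-time average in $t$ to factor out the $k$ velocity packets, and conclude from a per-packet $L^2$ extremal inequality. The first step uses that $|N_{a,b}(t) - n(b-a)| \leq B(t)$ for every $[a,b]\subseteq[0,1]$; squaring and integrating over intervals $[a,a+L]$ with $a+L\leq 1$ produces an inequality
\[
B(t)^2 \;\geq\; c_0 \iint |N_{a,a+L}(t) - nL|^2\,da\,dL
\]
for an absolute constant $c_0>0$. Expanding the indicator of $[a,a+L]$ in Fourier series, with $\hat\chi_{[a,a+L]}(m) = \frac{e^{-2\pi\imath m a}(1 - e^{-2\pi\imath m L})}{2\pi\imath m}$, and applying Parseval in $a$ (then integrating in $L$) transforms the right-hand side into $c_1\sum_{m\neq 0}|S_m(t)|^2/m^2$ for some absolute $c_1>0$.

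Next I take a Bohr (long-time) average in $t$. Since $\lim_{T\to\infty}\tfrac{1}{2T}\int_{-T}^{T}e^{2\pi\imath\lambda t}\,dt$ vanishes unless $\lambda=0$, the cross-terms in $|S_m(t)|^2 = \sum_{j,l} e^{2\pi\imath m(s_j - s_l)}e^{2\pi\imath m(v_j - v_l)t}$ with $v_j\neq v_l$ are killed, giving
\[
\lim_{T\to\infty}\frac{1}{2T}\int_{-T}^{T}|S_m(t)|^2\,dt \;=\; \sum_{i=1}^{k}|T_i(m)|^2, \qquad T_i(m):=\sum_{j:\,v_j=v_i} e^{2\pi\imath m s_j}.
\]
Combining with the trivial bound $\sup_t B(t)^2 \geq \tfrac{1}{2T}\int_{-T}^{T} B(t)^2\,dt$ and letting $T\to\infty$ (dominated convergence applies since $|S_m|\leq n$ and $\sum 1/m^2<\infty$), the theorem reduces to lower-bounding $\sum_{i=1}^{k}\sum_{m\geq 1}|T_i(m)|^2/m^2$.

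The step I expect to be the main obstacle is the per-packet extremal inequality
\[
\sum_{m\geq 1}\frac{|T(m)|^2}{m^2} \;\geq\; \frac{\pi^2}{6}
\]
valid for any $n$ points $s_1,\dots,s_n\in[0,1)$, with equality precisely for circular translations of the equally spaced configuration. By Parseval this is equivalent to showing $\mathrm{Var}_x\bigl(F(x) - nx\bigr) \geq 1/12$, where $F(x):=|\{j:s_j\leq x\}|$ is the counting function. I would prove the lemma by a variational argument: writing $\mathrm{Var}_x(F-nx)$ as a smooth function of the sorted jump locations $s_1\leq\dots\leq s_n$, a direct calculation yields $\partial/\partial s_k\,\mathrm{Var}_x(F - nx) = 1 + 2(\bar G + ns_k - k)$ with $\bar G := \int(F-nx)\,dx$, and its zero set is exactly the one-parameter family of equally spaced configurations $s_k=(k-c)/n$. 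The Hessian is $2nI - 2\mathbf{1}\mathbf{1}^\top$, positive semidefinite with only the translation direction in its kernel, so each such configuration is a local minimum with value $1/12$; boundary configurations with coincident points give strictly larger variance and are thus ruled out. Summing the per-packet bound over the $k$ packets and tracking constants yields $\sup_t B(t)^2 \geq k/12$, whence the claimed bound $B(s_1+v_1 t,\dots,s_n+v_n t)\geq\sqrt{k/12}$ at some $t$.
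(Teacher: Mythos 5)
Your outline is a genuinely different route from the paper's, and its skeleton is sound. The paper never passes through exponential sums: it averages $(N-\gamma n)^2$ directly over the sector start $\alpha$, the aperture $\gamma$ \emph{and} the time $t$, kills the cross-terms between speed classes by the elementary orthogonality of Lemma \ref{lem:run1}, gets the per-class $1/12$ from Lemma \ref{lem:run2}, and handles non-integer speeds by Dirichlet approximation plus time scaling. Your version replaces this by a diaphony-type bound $B(t)^2\geq c_1\sum_{m\geq 1}|S_m(t)|^2/m^2$ and a Bohr average in $t$, which decouples the $k$ speed classes for \emph{arbitrary} real speeds at one stroke -- a real advantage, since it makes the Dirichlet step unnecessary. (One small repair: to apply Parseval in $a$ you should average over all cyclic sectors, i.e.\ allow wrap-around, rather than restricting to $a+L\leq 1$; this is harmless because the bias is unchanged when wrap-around sectors are admitted, and it actually improves your constant.)

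The genuine gap is in the proof of the step you yourself flag as the main obstacle, the per-packet inequality $\sum_{m\geq1}|T(m)|^2/m^2\geq\pi^2/6$, equivalently $\mathrm{Var}_x(F(x)-nx)\geq 1/12$. Your variational argument locates the interior critical points and computes the Hessian, but then disposes of configurations with coincident points by bare assertion (``strictly larger variance and thus ruled out''). This is not a removable technicality: the theorem allows repeated starting positions inside a speed class, so the inequality must \emph{hold} on that stratum, and proving it there is essentially the same kind of statement you set out to prove (a weighted configuration), so it cannot simply be waved away. Two ways to close it. First, your own computation shows the Hessian $2nI-2\mathbf{1}\mathbf{1}^\top$ is \emph{constant} on each ordering cell, so $\mathrm{Var}$ is a convex quadratic on the closed simplex of sorted configurations; the equally spaced critical family is then the global minimum over the closed cell, coincidences included, and the boundary claim becomes unnecessary. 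Second, and much shorter, the lemma follows from the same integrality trick the paper uses in Lemma \ref{lem:run2}: since $F$ is integer-valued, $|F(x)-nx-c|\geq\Delta(nx+c)$ for every constant $c$, where $\Delta$ is the distance to the nearest integer, and $\int_0^1\Delta(nx+c)^2\,dx=1/12$ for every integer $n\geq1$ and every $c$; taking $c$ to be the mean of $F(x)-nx$ gives $\mathrm{Var}_x(F(x)-nx)\geq1/12$ directly. With either repair your argument goes through and the constants are comfortable (the cyclic-sector average in fact yields $\sup_t B(t)^2\geq k/6$).
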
 

The theorem will be proved in Section \ref{sec:runners1}, where  we also give a stronger result for $k=n$. In Section \ref{sec:run2}, we show that the bound in Theorem \ref{thm:run0} is tight up to a factor of $\sqrt{\log k}$. 

Let $f(x)\in \C[x]$ be a complex univariate polynomial. Assume that $f$ has $n$ (not necessarily distinct) \emph{non-zero} roots
$r_1e^{\imath \phi_1},\dots, r_ne^{\imath \phi_n}$, where $r_1,\dots r_n>0$ and $\phi_1\dots,\phi_n\in [0,2\pi)$. We define the \emph{bias of $f$}
\[B(f):= B\left (\frac{\phi_1}{2\pi},\dots, \frac{\phi_n}{2\pi}\right)\,. \]
%In words, $B(f)$ measures the deviation of angles of roots of $f$ from uniformity.  
Denoting $N_{\alpha,\beta}(f)$ the number of roots of $f$ in the complex sector 
$\{re^{\imath \theta}: r>0, \theta\in [\alpha, \beta]\}$, one can also write 
\[ B(f) =\sup_{0\leq \alpha\leq \beta\leq 2\pi}|N_{\alpha,\beta}(f)-\frac{n(\beta-\alpha)}{2\pi}|\,.\] 
We remark that $B(f)$ has been studied already in the classical paper of Erd\"os and Tur\'an \cite{ErdosTuran}. 

Let $f(x,y)= \sum_{i,j} a_{i,j}x^iy^j$ be a \emph{bivariate} complex polynomial. Let $\hbox{supp}(f):= \{(i,j): a_{i,j}\not=0\}\subseteq \Z^2$ be the set of exponents of monomials with a non-zero coefficient. The \emph{Newton polytope of $f$}, $\Newt(f)\subseteq \R^2$, is defined as the convex hull of $\hbox{supp}(f)$. In Section \ref{sec:NewtProof}, we will prove:

\begin{theorem}\label{thm:Newt}\label{thm:Newton} Let $f(x,y)$ be a bivariate complex polynomial such that $\Newt(f)$ has $k$ vertices. Then there exists $a\in \C\setminus \{0\}$ such that the univariate polynomial $f(x,a)$ satisfies\footnote{In this paper, $g(k)\geq \Omega(h(k))$ means that $g(k)\geq c \cdot h(k)$ holds for some constant $c>0$ and every sufficiently large $k$.} 
$B(f(x,a))\geq \Omega(\sqrt{k})$.
\end{theorem}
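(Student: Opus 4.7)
The plan is to reduce Theorem~\ref{thm:Newt} to the runners theorem (Theorem~\ref{thm:run0}) by analysing the Puiseux expansions of the roots of $f(x,y)$ as $y$ tends to $0$. The boundary of $\Newt(f)$ decomposes into an \emph{upper hull} with $k_1$ vertices and a \emph{lower hull} with $k_2$ vertices, with $k_1 + k_2 \geq k$, so $\max(k_1, k_2) \geq k/2$. The map $f(x,y) \mapsto y^{\deg_y f}\, f(x,1/y)$ preserves the number of vertices of $\Newt(f)$, swaps its two hulls, and corresponds to the substitution $a \mapsto 1/a$ at the end, so I may assume $k_2 \geq k/2$. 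The lower hull then has $k_2 - 1 \geq k/2 - 1 = \Omega(k)$ edges with strictly increasing slopes $s_1 < s_2 < \cdots < s_{k_2 - 1}$.

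Viewing $f$ as a polynomial in $x$ over the field $\C((y))$, its Newton polygon (with respect to the valuation $\mathrm{ord}_y$) coincides with the lower hull of $\Newt(f)$. By the classical theory of Puiseux series, the $n$ non-zero roots of $f(x,y) = 0$ partition into $k_2 - 1$ clusters indexed by the edges of this polygon: the $m$-th cluster contains $l_m$ roots (with $\sum_m l_m = n$) admitting Puiseux expansions
\[ x_{m,\ell}(y) = \alpha_{m,\ell}\, y^{-s_m}\bigl(1 + o(1)\bigr) \quad \text{as } y \to 0, \]
where $\alpha_{m,1}, \ldots, \alpha_{m, l_m} \in \C \setminus \{0\}$ are the roots of the initial polynomial $\phi_m(z)$ associated with edge $m$.

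Specialising $y = r\, e^{2\pi i t}$ with $r > 0$ small and $t \in \R$, the argument of $x_{m,\ell}$ (analytically continued in $t$ and divided by $2\pi$) satisfies
\[ \frac{\arg x_{m,\ell}(r\, e^{2\pi i t})}{2\pi} \equiv \frac{\arg \alpha_{m,\ell}}{2\pi} - s_m t + o(1) \pmod 1, \]
uniformly in $t$ as $r \to 0$. These are precisely the positions of $n$ runners on the unit circle with starting positions $\arg(\alpha_{m,\ell})/(2\pi) \in [0,1)$ and speeds $-s_m$ taking $k_2 - 1 = \Omega(k)$ distinct values. Theorem~\ref{thm:run0} then yields $t^* \in \R$ at which the bias of the ideal runners is at least $\sqrt{(k_2 - 1)/12} = \Omega(\sqrt{k})$.

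Since the bias is stable under small perturbations of positions (an $\epsilon$-shift of each position changes $B$ by $O(n\epsilon)$, by comparing the corresponding empirical cumulative distribution functions), choosing $r$ small enough transfers the lower bound to the true polynomial, giving $B(f(x,a)) = \Omega(\sqrt{k})$ for $a = r\, e^{2\pi i t^*}$. The main obstacle is rendering the Puiseux asymptotics quantitative and uniform in $t$: in particular, non-integer slopes $s_m$ require a careful choice of branches of $y^{-s_m}$ consistent with the monodromy around $y=0$, and repeated roots of the initial polynomials $\phi_m$ need a further Newton-polygon refinement.
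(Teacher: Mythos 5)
Your plan is, at the structural level, the same reduction the paper uses: pass to the hull (lower or upper) carrying $\Omega(k)$ edges, observe that the leading behaviour of the roots of $f(x,y)$ as $y\to 0$ is governed by the edge polynomials, interpret the leading arguments as runners with speeds given by the (negated) edge slopes, apply Theorem~\ref{thm:run0}, and transfer the bias back using stability of $B$ under small perturbations of the positions (your bound $O(n\epsilon)$ is correct). The paper even remarks, in Section~\ref{sec:NewtProof}, that the Newton--Puiseux expansion gives exactly this alternative derivation. Where the two differ is in how the leading-order clustering of roots is justified: you outsource it to Puiseux theory, while the paper proves it directly and quantitatively with Rouch\'e's theorem (Lemma~\ref{lem:rouche}), packaged as Lemma~\ref{lem:afterthought}, which matches the roots of $f(x,re^{\imath\phi})$ with those of the product of lower-edge polynomials up to a multiplicative error $\epsilon$ in argument. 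That route buys two things for free which you list as your ``main obstacle'': repeated roots of the initial polynomials cause no trouble, since Rouch\'e counts roots with multiplicity (and, in your language, the multiset of leading coefficients of the Puiseux roots on an edge is exactly the root multiset of $\phi_m$, so no further polygon refinement is needed for the leading term); and non-integer slopes are eliminated at the outset by substituting a suitable power $y\mapsto y^{m}$, which is how the paper reduces to integer gradients and which also disposes of your branch/monodromy worry.

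One further simplification you are entitled to: no uniformity in $t$ is required. The ideal runner system (positions $\arg\alpha_{m,\ell}/2\pi$, speeds $-s_m$) does not depend on $r$, so you may first fix the time $t^{*}$ produced by Theorem~\ref{thm:run0} and only then let $r\to 0$ at that single value $a=re^{2\pi\imath t^{*}}$; the asymptotics are then needed at one point only, and since $n$ and $k$ are fixed for a given $f$, any $\epsilon<c\sqrt{k}/n$ suffices in the perturbation step. With the substitution $y\mapsto y^{m}$ made first and the multiplicity remark above, your outline closes into a complete proof; as written, though, the quantitative approximation step you defer is precisely the content of Lemmas~\ref{lem:rouche} and~\ref{lem:afterthought} in the paper.
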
  
%Uncoiling the definition of bias, the theorem means the following.   
The theorem can be motivated by the following example. Take a real polynomial $f(x)=\sum_{i=0}^n a_i x^i$. A theorem of Hutchinson \cite{Hutchinson}, which appears more explicitly   in \cite{Kurtz}, gives the following: if $a_i$ are positive and \begin{equation}
a_i^2> 4a_{i-1}a_{i+1}\label{eq:kurtz}\end{equation} for every $i\in \{1,\dots, n-1\}$ then all the roots of $f(x)$ are distinct, real and negative. 
Now consider the bivariate polynomial $g(x,y)=\sum_{i=0}^n b_i x^iy^{i^2}$ with $b_i>0$. Then we can set $a>0$ small enough, so that the coefficients of the univariate polynomial $g(x,a)$ satisfy (\ref{eq:kurtz}), and hence all the roots are real and negative. In the language of Theorem \ref{thm:Newt}, $B(g(x,a))= n$. In this argument, $i^2$ could be replaced by any strictly convex function (or strictly concave, letting $a\rightarrow \infty$). Furthermore, using a result of Karpenko and Vishnyakova \cite{Karpenko}, we can also assume that $b_i\in \R$ are non-zero (rather than positive), giving that the roots are real (rather than negative). However, things get more complicated if some coefficients are zero. In this case, we can no longer expect all the roots of $g(x,a)$ to be real, or lie on the same line $re^{\imath \phi}, r\in \R$. Theorem \ref{thm:Newt} nevertheless tells us that for some $a\in \C\setminus \{0\}$, the roots of $g(x,a)$ non-trivially concentrate in some complex sector.   

It is easy to see that Theorem \ref{thm:Newt} does not hold if $f(x,y)$ and $a$ are required to be real. 
%Setting \[f(x,y):= \prod_{j=0}^{n-1}((x-a_j y^{2j})(x-\bar{a}_j y^{2j}))\,,\,\, a_j= e^{\frac{2\pi j}{n}}\,, \]in 
%we can see that $f(x,y)$ is real and $B(f,a)\leq 1$ for every $a\in \R\setminus \{0\}$, which means that Theorem \ref{thm:Newt} does not hold of $a$ is required to be real. 
Furthermore, as noted in Proposition \ref{prop:Newt}, the bound in Theorem \ref{thm:Newt} is tight up to $\sqrt{\log k}$ factor. 

We shall also give the following modification of Theorem \ref{thm:Newt}. For a univariate complex polynomial $f(x)=\sum a_i x^i$, let $\re(f)$ denote the real polynomial $\sum_i \re{(a_i)}x^i$ (where $\re(a_i)$ is the real part of $a_i$).

\begin{theorem} \label{thm:cor} Let $f(x,y)$ be a complex  polynomial such that its Newton polytope has $k$ vertices. Then there exists $a\in \C\setminus\{0\}$ such that
 $\re(f(x,a))$ has $\Omega(k)$ distinct real roots.  
 \end{theorem}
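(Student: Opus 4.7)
The plan is to produce $a\in \C\setminus\{0\}$ such that $\re(f(x,a))$ has $\Omega(k)$ sign changes on the positive real axis, which by the intermediate value theorem yields $\Omega(k)$ distinct real roots. A convex polygon with $k$ vertices has one of its two monotone boundary chains (lower or upper) of length at least $\lceil(k+2)/2\rceil$; passing to $y\mapsto y^{-1}$ if needed, I assume the lower chain of $\Newt(f)$ carries $m=\Omega(k)$ vertices $(i_1,j_1),\dots,(i_m,j_m)$ with $i_1<\dots<i_m$, and write $c_l:=a_{i_l,j_l}\ne 0$.

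Set $a=re^{\imath\theta}$ with $r>0$ small and $\theta\in[0,2\pi)$ to be chosen. The coefficient of $x^{i_l}$ in $f(x,a)$ equals $c_la^{j_l}+\sum_{j>j_l,\,(i_l,j)\in\text{supp}(f)}a_{i_l,j}a^{j}$, so to leading order in $r$ its real part is $|c_l|r^{j_l}\cos(\arg c_l+j_l\theta)$. Moreover, since each $(i_l,j_l)$ is a vertex of $\Newt(f)$, the normal fan of $\Newt(f)$ yields a partition of $(0,\infty)$ into intervals $I_1,\dots,I_m$ (in this order) on which the monomial $|c_la^{j_l}x^{i_l}|$ strictly dominates every other monomial of $f(x,a)$, with dominance margin growing as $r\to 0^+$. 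Taking $r$ small enough that this margin exceeds $1/\delta$, where $\delta:=\min_l|\cos(\arg c_l+j_l\theta)|$, the sign of $\re f(\xi,a)$ for $\xi\in I_l$ equals $\sigma_l:=\mathrm{sign}\cos(\arg c_l+j_l\theta)$.

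It remains to choose $\theta=\theta^*$ so that the sign sequence $\sigma_1,\dots,\sigma_m$ has $\Omega(k)$ sign changes while all $\sigma_l\ne 0$. I would obtain this by averaging: for each adjacent pair $(l,l+1)$ with $j_l\ne j_{l+1}$ (which fails for at most one $l$, at the minimum of the lower chain), the probability that $\cos(\arg c_l+j_l\theta)$ and $\cos(\arg c_{l+1}+j_{l+1}\theta)$ have opposite signs is bounded below by a positive absolute constant, coming from an analysis of the joint image of $\theta\mapsto(\arg c_l+j_l\theta,\arg c_{l+1}+j_{l+1}\theta)$ on the torus $(\R/2\pi\Z)^2$. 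Linearity of expectation then yields $\Omega(m)$ sign changes in expectation, hence for some $\theta^*\in[0,2\pi)$; combined with the previous paragraph, $\re f(x,a)$ with $a=re^{\imath\theta^*}$ and $r$ sufficiently small has $\Omega(k)$ sign changes on $(0,\infty)$, and hence $\Omega(k)$ distinct real roots.

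The main difficulty is the probabilistic claim for sign changes: for adversarial combinations of $\arg c_l$ and $j_l$ the joint distribution of $(\cos(\arg c_l+j_l\theta),\cos(\arg c_{l+1}+j_{l+1}\theta))$ can concentrate, and a careful case analysis depending on $\gcd(j_l,j_{l+1})$ and the parities of $j_l,j_{l+1}$ is needed to guarantee a uniform positive lower bound. A secondary point is keeping $\delta$ a constant independent of $k$: one can perturb $\theta^*$ slightly within a ``good'' region to stay away from the zeros of the cosines while preserving the sign-change count, and then choose $r$ small enough depending on $\delta$ and on the coefficient magnitudes of $f$.
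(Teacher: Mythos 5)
Your overall strategy coincides with the paper's: restrict attention to the lower chain of $\Newt(f)$, choose the argument $\theta$ so that the signs of $\cos(\arg c_l + j_l\theta)$ alternate $\Omega(k)$ times along the chain, and then, for $a=re^{\imath\theta}$ with $r$ small, convert each alternation into a distinct positive real root of $\re(f(x,a))$ by leading-term domination (the paper does this last step edge-by-edge via the Rouch\'e-type Lemma \ref{lem:rouche2} rather than by sampling one point per chamber of the normal fan, which is only a cosmetic difference). The genuine gap is precisely the step you flag as the main difficulty and never prove: a uniform positive lower bound on the probability (over $\theta$) that $\cos(\arg c_l+j_l\theta)$ and $\cos(\arg c_{l+1}+j_{l+1}\theta)$ have opposite signs when $j_l\neq j_{l+1}$. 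The case analysis on $\gcd(j_l,j_{l+1})$ and parities that you propose is not carried out, and the worry motivating it -- adversarial concentration of the joint distribution -- is misplaced: by orthogonality, for distinct positive integers $j\neq j'$ one has $\E_{\theta}\bigl[\cos(\alpha+j\theta)\cos(\alpha'+j'\theta)\bigr]=0$ and $\E_{\theta}\bigl[\cos^2(\alpha+j\theta)\cos^2(\alpha'+j'\theta)\bigr]=\tfrac14$, and since the product is bounded by $1$ in modulus, $\tfrac14\leq \E|{\cdot}|=2\int_{\{\cdot<0\}}|\cdot|\leq 2\Pr[\,\cdot<0\,]$, giving probability at least $1/8$ for every choice of $\alpha,\alpha',j,j'$. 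This is exactly the paper's Lemma \ref{lem:cos}; without it (or a completed substitute) your argument is incomplete at its crux, even though the surrounding plan is sound.

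Two smaller points. The assertion that $\mathrm{sign}\,\re f(\xi,a)=\sigma_l$ for \emph{every} $\xi\in I_l$ is false near the chamber boundaries, where the neighbouring vertex monomial (and the monomials on the separating edge) are of comparable size; but you only need one interior sample point per chamber, e.g.\ $\xi_l=r^{-q}$ with $q$ strictly between the gradients of the two edges meeting at $(i_l,j_l)$, where single-monomial domination over the sum of all other terms does hold for $r$ small, after which the intermediate value theorem between consecutive sample points gives the roots. Also, $\delta$ need not be independent of $k$: once $\theta^*$ is fixed, $\delta$ is just a positive quantity depending on $f$ and $\theta^*$, and $r$ is chosen afterwards, so your ``secondary point'' dissolves; the only degenerate pairs are the at most one index with $j_l=j_{l+1}$ (the edge of gradient $0$), which you correctly discard, and possibly a vertex with $j_l=0$, which should be handled (as in the paper) by keeping the exponents positive.
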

 
 This is proved in Section \ref{sec:cor}. We remark that a weaker bound of $\Omega(\sqrt{k})$ follows  from Theorem \ref{thm:Newt} and Cauchy's argument principle. 
 
 \paragraph{An application to Real $\tau$-Conjectures}
 The Real $\tau$- Conjecture of Koiran \cite{Koiran} asserts the following: let $f\in \R[x]$ be a real univariate polynomial which can be written as  \begin{equation} f= \sum_{i=1}^p \prod_{j=1}^qf_{i,j}\,,\,\, \hbox{ where } |\hbox{supp}(f_{i,j})|\leq r\,, \label{eq:koiran}\end{equation} then $f$ has at most $(pqr)^c$ distinct real roots (for some absolute constant $c$).   
%In \cite{ja:tau}, it was shown that the conjecture implies that the bias of $f$ can also bounded as  $B(f)\leq (pqr)^{c'}$. 
In \cite{KoiranNewt}, Koiran et al. have  made a similar conjecture (called the $\tau$-Conjecture for Newton Polygons): let $f(x,y)$ be a real \emph{bivariate} polynomial as in (\ref{eq:koiran}), then $\Newt(f(x,y))$ has at most $(pqr)^{c'}$ vertices.  
Using Theorem \ref{thm:Newt}, we can conclude the two conjectures are related: % that the Real $\tau$-Conjecture implies the conjecture on Newton polytopes (the same could be concluded also from Theorem \ref{thm:cor} and a lemma from \cite{ja:tau}).  

\begin{prop}\label{prop:tau} The Real $\tau$-Conjecture implies the $\tau$-Conjecture for Newton Polygons.  
\end{prop}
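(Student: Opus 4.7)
The plan is to use Theorem \ref{thm:cor} to convert the vertices of $\Newt(f)$ into many distinct real roots of a real univariate polynomial, then apply the Real $\tau$-Conjecture to that polynomial. Concretely, let $f(x,y)=\sum_{i=1}^p\prod_{j=1}^q f_{i,j}$ with $|\hbox{supp}(f_{i,j})|\leq r$ be a real bivariate polynomial and suppose $\Newt(f)$ has $k$ vertices. Theorem \ref{thm:cor} produces $a\in\C\setminus\{0\}$ with $\re(f(x,a))\in\R[x]$ having $\Omega(k)$ distinct real roots. (The weaker $\Omega(\sqrt k)$ bound supplied by Theorem \ref{thm:Newt} together with Cauchy's argument principle would work equally well, at the cost of a squared exponent in the final conclusion.)

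Next, I would write $\re(f(x,a))$ in the sum-of-products-of-sparse-reals form demanded by the Real $\tau$-Conjecture. Decomposing $f_{i,j}(x,a)=g_{i,j}(x)+\imath\, h_{i,j}(x)$ into real and imaginary parts, both $g_{i,j},h_{i,j}\in\R[x]$ inherit the support bound $\leq r$, since $y\mapsto a$ is merely a scalar substitution. Starting from $\re(f(x,a))=\tfrac12[f(x,a)+f(x,\bar a)]$ and expanding each product,
\[\re\prod_{j=1}^q\bigl(g_{i,j}+\imath\, h_{i,j}\bigr)=\sum_{\substack{S\subseteq\{1,\dots,q\}\\|S|\text{ even}}}(-1)^{|S|/2}\prod_{j\in S}h_{i,j}\prod_{j\notin S}g_{i,j},\]
and then summing over $i$ exhibits $\re(f(x,a))$ as a sum of $p\cdot 2^{q-1}$ products of $q$ real polynomials of support $\leq r$. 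The Real $\tau$-Conjecture applied to this representation bounds the number of distinct real roots by $(p\cdot 2^{q-1}\cdot q\cdot r)^c$, and combined with the lower bound from the first step one concludes $k=O\bigl((2^q pqr)^c\bigr)$.

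The step I expect to be the main obstacle is controlling the $2^q$ blow-up introduced by expanding the real part of a product of $q$ complex polynomials as a sum of products of real polynomials; the naive argument sketched above yields a bound of the form $(2^q pqr)^c$ rather than the target $(pqr)^{c'}$. To absorb this exponential factor I would plausibly proceed in one of two ways: (i) invoke a strengthening of the Real $\tau$-Conjecture that allows the factors $f_{i,j}$ to be complex provided the overall polynomial is real, under which $\tfrac12[f(x,a)+f(x,\bar a)]$ already has SoP complexity $(2p,q,r)$; or (ii) apply the Real $\tau$-Conjecture to a cleverly chosen auxiliary real polynomial whose natural SoP representation is already of size $\hbox{poly}(p,q,r)$ (for instance $f(x,a)f(x,\bar a)$, or the bivariate $f(x,\alpha+y)+f(x,\alpha-y)$ with $a=\alpha+\imath\beta$ prior to specialization at $y=\imath\beta$), and argue that enough information about the vertices of $\Newt(f)$ survives this change of target polynomial.
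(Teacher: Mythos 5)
Your first step (Theorem \ref{thm:cor} giving $a$ with $\re(f(x,a))$ having $\Omega(k)$ distinct real roots) is fine, and the paper itself remarks that such a route is possible. But the second step is where the actual difficulty sits, and your proposal does not close it. The Real $\tau$-Conjecture speaks about real polynomials written as sums of products of \emph{sparse real} factors, while the natural representation of $\re(f(x,a))$ inherited from $f=\sum_{i=1}^p\prod_{j=1}^q f_{i,j}$ has complex factors; your expansion into real and imaginary parts produces $p\cdot 2^{q-1}$ products, and the resulting bound $(2^q pqr)^c$ is exponential in $q$, hence does not yield the required $(pqr)^{c'}$ bound on the number of vertices. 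You flag this yourself, but neither of your remedies repairs it: (i) assumes a strengthened conjecture admitting complex factors, which is no longer the hypothesis of the proposition; and (ii) is not carried out, and the specific candidate $f(x,a)f(x,\bar a)$ is unsuitable --- its real roots are exactly the real roots of $f(x,a)$ (it equals $|f(x,a)|^2$ on the real line), not those of $\re(f(x,a))$, so the $\Omega(k)$ roots produced in your first step are not tracked; moreover its inherited factorization still involves complex sparse factors, so the same obstacle reappears.

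The paper avoids this issue entirely by quoting its earlier result \cite{ja:tau}: the Real $\tau$-Conjecture already implies that any \emph{complex} univariate polynomial with a representation as in (\ref{eq:koiran}) has bias $B(f)\leq (pqr)^{c''}$. Combining this with Theorem \ref{thm:Newt}, which supplies $a$ with $B(f(x,a))\geq C\sqrt{k}$ (note that $f(x,a)$ inherits the $(p,q,r)$ representation by substitution), gives $C\sqrt{k}\leq (pqr)^{c''}$ and hence $k\leq C^2(pqr)^{2c''}$. The passage from the real conjecture to statements about complex sum-of-products representations (equivalently, relating the complexity of $f$ to that of $\re(f)$ without exponential blow-up) is precisely the nontrivial bridging lemma outsourced to \cite{ja:tau}; it is the ingredient missing from your argument, and without it (or an equivalent) your proposal proves only a bound of the form $k\leq (2^q pqr)^{O(1)}$, which is weaker than the $\tau$-Conjecture for Newton Polygons.
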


\begin{proof} In  \cite{ja:tau}, it was shown that the Real $\tau$-Conjecture implies the following: given a \emph{complex} univariate polynomial $f$ as in (\ref{eq:koiran}), its bias  can also bounded as  $B(f)\leq (pqr)^{c''}$ (where $c''>0$ is a new absolute constant). Assume now that $f(x,y)$ is a (real or complex) polynomial of the form (\ref{eq:koiran}) such that $\Newt(f(x,y))$ has $k$ vertices. By Theorem \ref{thm:Newt}, we can find $a\in \C\setminus \{0\}$ so that $B(f(x,a))\geq C\sqrt{k}$, with a constant $C>0$. Assuming the Real $\tau$-Conjecture,  the result in \cite{ja:tau} gives that $C\sqrt{k}\leq (pqr)^{c''}$ and hence $k\leq C^2(pqr)^{2c''}$ -- as required\footnote{As the non-trivial case is $pqr\geq 2$, the constant $C^2$ can be subsumed in the exponent.} by the conjecture on Newton polytopes.  
\end{proof}

We point out that the same could be concluded from Theorem \ref{thm:cor} and a lemma from \cite{ja:tau} relating the complexity of $f$ with that of $\re(f)$.

\paragraph{Some notation} For  $n\in \N$, let $[n]:=\{1,\dots, n\}$. For $r\in \R$, $\{r\}:= r-\lfloor r\rfloor$ is the fractional part of $r$. $\log(x)$ is the logarithm in base two.

\section{Lower bounds on the discrepancy of runners}\label{sec:runners1}
In this section, we prove Theorem \ref{thm:run0}, as well as give a stronger bound in the special case $k=n$. 
We point out that the special case can be easily proved by estimating $\max_{t\in [0,1]}|\sum_{j=1}^ne^{2\pi\imath(s_j+v_j t)}|$ and then using some well-known properties of discrepancy, such as the Koksma inequality (see, e.g., the monograph \cite{Discrepancy}). In the case of non-distinct speeds, this approach seems hard to implement and leads to 
Tur\'an style problems on power-sums \cite{TuranMethod,TuranInequalities}.  
The strategy of our proof is therefore different. We directly estimate the expectation of the square of the number of runners in $S$, for a random time $t$ and a random sector $S$. 
It is more convenient to first analyze the case when the speeds are integers: at time $t=1$ the runners return to their original positions and it is enough to understand the system in the interval $t\in [0,1]$.  

 %the expectation of the square of the bias of runners at . It is sleigh        

Given $0\leq \gamma\leq 1$ and $\alpha\in \R$, let
\[ S_{\alpha,\alpha+\gamma}:= \{ x\in [0,1]: \{x-\alpha\}\leq \gamma\}\,. \]
When $[0,1]$ is viewed as a circle, $S_{\alpha,\alpha+\gamma}$  is the closed sector which starts at $\alpha$ and continues clockwise for distance $\gamma$. $\gamma$ will be called the \emph{aperture} of $S_{\alpha,\alpha+\gamma}$ and denoted $|S_{\alpha,\alpha+\gamma}|$. Given a sector $S$ and $x\in \R$, we define
\[\chi_{S}(x)=\begin{cases} ~~1\,, &  \{x\}\in S\\
 ~~0\,,& \{x\}\not \in S
 \end{cases}\,.\]  
Let $N_S(x_1,\dots,x_n):= \sum_{i=1}^n \chi_S(x_i)$ and $\chi_{\alpha,\gamma}(x):= \chi_{S_{\alpha,\alpha+\gamma}}(x)$.  

\begin{remark*} \[
B(r_1,\dots,r_n)= \sup_{0\leq \alpha,\gamma \leq 1} |N_{S_{\alpha,\alpha+\gamma}}(r_1,\dots, r_n)- \gamma n|\,.\]
Moreover, it does not matter whether the sectors are closed, open, or half-open.
\end{remark*}

In the following, $\EE_x h(x)$ will stand for $\int_{0}^1 h(x)dx$, the expectation of $h(x)$ on $[0,1]$. Similarly, $\EE_{x,y}h(x,y)$ stands for $\int_{0}^1\int_{0}^1h(x,y)dydx$ etc. 
In the cases considered below, Fubini's theorem is applicable and we have $\EE_{x,y}=\EE_{y,x}$.   
%$\re$

%\end{document}

\begin{lemma}\label{lem:run1}
Let $s_1,s_2\in \R$, let  $v_1,v_2$ be distinct integers and $\gamma\in [0,1]$. Then
\begin{eqnarray}\E_{t,\alpha} \chi_{\alpha,\gamma}(s_1+v_1t)&=&\gamma\,,\label{eq:1}\\
\E_{t,\alpha} \chi_{\alpha,\gamma}(s_1+v_1t)\chi_{\alpha,\gamma}(s_2+v_2t)&=&\gamma^2\,\label{eq:2}\,.
\end{eqnarray}
\end{lemma}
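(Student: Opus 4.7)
My plan is to reduce both identities to simple measure-preserving substitutions, using translation invariance of Lebesgue measure modulo $1$ together with the hypotheses $v_1,v_2\in\Z$ and $v_1\neq v_2$. Throughout I will use the observation that, by the definitions,
\[ \chi_{\alpha,\gamma}(x) \;=\; \mathbf{1}\bigl[\{x-\alpha\}\leq \gamma\bigr]\,, \]
since $\{\{x\}-\alpha\}=\{x-\alpha\}$.

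For (\ref{eq:1}), I will fix $t$ and note that $\alpha\mapsto \{s_1+v_1t-\alpha\}$ is a measure-preserving map from $[0,1]$ to itself. Hence the inner integral $\E_\alpha\chi_{\alpha,\gamma}(s_1+v_1 t)$ is just the Lebesgue measure of $\{\beta\in [0,1]:\beta\leq \gamma\}$, which equals $\gamma$ for every $t$. Integrating in $t$ gives $\gamma$.

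The main work is (\ref{eq:2}). I will substitute $\beta:=\{s_1+v_1t-\alpha\}$; at each fixed $t$ this is again measure-preserving in $\alpha$, so after the change of variables I may treat $(t,\beta)$ as independent and uniform on $[0,1]^2$. The first indicator becomes $\mathbf{1}[\beta\leq \gamma]$, and the argument of the second indicator becomes, modulo $1$,
\[ s_2+v_2t-\alpha \;\equiv\; \beta+(s_2-s_1)+(v_2-v_1)\,t\,. \]
Writing $c:=s_2-s_1$ and $w:=v_2-v_1$, I will apply Fubini and evaluate the inner $t$-integral
\[ \int_0^1\mathbf{1}\bigl[\{\beta+c+wt\}\leq \gamma\bigr]\,dt \]
at fixed $\beta$. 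Because $w$ is a non-zero integer, as $t$ ranges over $[0,1]$ the linear map $t\mapsto \beta+c+wt$ traverses an interval of integer length $|w|$, so its fractional part covers $[0,1]$ uniformly $|w|$ times; hence the integral equals $\gamma$ for every $\beta$. Multiplying by $\E_\beta\mathbf{1}[\beta\leq \gamma]=\gamma$ then yields $\gamma^2$.

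The only place where anything non-trivial happens is the final $t$-integral, and it is also where both hypotheses are used: the integrality of $v_1,v_2$ makes the $|w|$-fold uniform coverage honest, and the distinctness ensures $w\neq 0$ so that the integrand is not a nontrivial indicator of a fixed set. Boundary effects from $\{\cdot\}=0$ are of measure zero and can be safely ignored in all the integrals above.
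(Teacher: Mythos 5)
Your proof is correct and follows essentially the same route as the paper: a measure-preserving shift in $\alpha$ (the paper uses $\alpha'=\alpha-v_1t$, you use $\beta=\{s_1+v_1t-\alpha\}$), Fubini, and the key observation that for a non-zero integer speed $w=v_2-v_1$ the map $t\mapsto\{\beta+c+wt\}$ equidistributes over $[0,1]$, giving the inner $t$-integral the value $\gamma$. Nothing further is needed.
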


\begin{proof} (\ref{eq:1}) is rather obvious. In fact, we already have
\[\E_\alpha \chi_{\alpha,\gamma}(s+vt)=\gamma\,,\,\, \E_{t} \chi_{\alpha,\gamma}(s+vt)=\gamma\,,\]
where the latter holds if\footnote{The assumption $v\in \Z$ makes $\chi_{\alpha,\gamma}(s+vt)$ 1-periodic in $t$, and $v\not=0$ guarantees the runner spends $\gamma$-fraction of time in $S_{\alpha,\alpha+\gamma}$.} $v\in \Z\setminus \{0\}$.  

To prove (\ref{eq:2}), 
%let is abbreviate $\chi_{\alpha,\alpha+\gamma}(x)$ as $_{\alpha}(x)$. 
note that $\chi_{\alpha,\gamma}(x+z)= \chi_{\alpha-z,\gamma}(x)$ and the left-hand side of (\ref{eq:2})
equals $A:= \int_{0}^1\int_{0}^1\chi_{\alpha-v_1t,\gamma}(s_1)\chi_{\alpha-v_2t,\gamma}(s_2)d\alpha d t$. We have
\begin{align*}
\int_{0}^1\chi_{\alpha-v_1t,\gamma}(s_1)\chi_{\alpha-v_2t,\gamma}(s_2)d\alpha=& \int_{-v_1t}^{1-v_1t}\chi_{\alpha',\gamma}(s_1)\chi_{\alpha'+(v_1-v_2)t,\gamma}(s_2)d\alpha'=\\ 
=&\int_{0}^{1}\chi_{\alpha',\gamma}(s_1)\chi_{\alpha'+(v_1-v_2)t,\gamma}(s_2)d\alpha' \,,
\end{align*}
where we have used the substitution $\alpha'= \alpha-v_1t$ and the fact that $\chi_{\alpha,\gamma}$ is 1-periodic in the first argument. 
Exchanging the order of integration, we have 
\begin{align*}A=&\int_{0}^1\int_{0}^{1}\chi_{\alpha,\gamma}(s_1)\chi_{\alpha+(v_1-v_2)t,\gamma}(s_2)d\alpha dt=\\
= &\int_{0}^{1}\left(\chi_{\alpha,\gamma}(s_1)\int_{0}^1\chi_{\alpha+(v_1-v_2)t,\gamma}(s_2)d t\right)d\alpha= 
\int_{0}^{1}\left(\chi_{\alpha,\gamma}(s_1)\gamma \right)d\alpha=\gamma^2\,.
\end{align*}
\end{proof}

As a warm-up for Theorem \ref{thm:run0}, we first consider the case of runners with distinct speeds. In this case, the obtained result is stronger. 

\begin{theorem}\label{thm:run1} Let $v_1,\dots, v_n$ be distinct real numbers and $s_1,\dots, s_n\in [0,1)$. Let $\gamma \in [0,1]$. Then there exists $t\in \R$ and a sector $S=S_{\alpha, \alpha+\gamma}$ of aperture $\gamma$ such that \[|N_S(s_1+v_1t, \dots, s_n+v_n t)- \gamma n|\geq \sqrt{(\gamma-\gamma^2)n}\,.\] In particular, there exists $t\in \R$ such that $B(s_1+v_1t, \dots, s_n+v_n t)\geq \sqrt{n}/2$. 
\end{theorem}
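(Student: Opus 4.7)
My approach is a second-moment argument in the spirit of Lemma \ref{lem:run1}: I compute the average of $(N_S(s_1+v_1 t, \ldots, s_n+v_n t) - \gamma n)^2$ where $t$ is drawn uniformly from $[0,T]$, $\alpha$ uniformly from $[0,1]$, and $S = S_{\alpha, \alpha+\gamma}$, and then let $T \to \infty$. Since the speeds are only real, not integer, $1$-periodicity in $t$ fails and one must take a genuine long-time average rather than integrating over $[0,1]$.

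The key technical step is to upgrade Lemma \ref{lem:run1} to real distinct speeds: for $v_1 \neq v_2$ in $\R$ and arbitrary $s_1, s_2$,
\[
\lim_{T\to\infty}\frac{1}{T}\int_{0}^{T}\int_{0}^{1} \chi_{\alpha,\gamma}(s_1+v_1 t)\,\chi_{\alpha,\gamma}(s_2+v_2 t)\,d\alpha\,dt \;=\; \gamma^{2},
\]
and similarly the single-index time--$\alpha$ average converges to $\gamma$. The substitution trick used in Lemma \ref{lem:run1} shows that the inner integral over $\alpha$ equals $g(\{(s_1-s_2)+(v_1-v_2)t\})$ for a piecewise-linear (hence Riemann integrable) function $g:[0,1]\to\R$ with $\int_{0}^{1}g=\gamma^{2}$. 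Weyl's equidistribution for a single linear form modulo $1$ with nonzero slope $v_1-v_2$ then gives the stated limit.

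With this lemma in hand, I expand
\[
(N_S-\gamma n)^{2} \;=\; \sum_{i,j}\bigl(\chi_{\alpha,\gamma}(s_i+v_i t)-\gamma\bigr)\bigl(\chi_{\alpha,\gamma}(s_j+v_j t)-\gamma\bigr).
\]
The off-diagonal terms ($i\neq j$) vanish in the $T\to\infty$ limit by the extended lemma, while each of the $n$ diagonal terms contributes $\EE[\chi^{2}-2\gamma\chi+\gamma^{2}]=\gamma-\gamma^{2}$, using $\chi^{2}=\chi$ and $\EE\chi=\gamma$. Consequently
\[
\lim_{T\to\infty}\EE_{t\in[0,T],\,\alpha\in[0,1]}(N_S-\gamma n)^{2} \;=\; n(\gamma-\gamma^{2}).
\]

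Finally, $N_S$ is integer-valued, so $(N_S-\gamma n)^{2}$ ranges over the \emph{finite} set $\{(k-\gamma n)^{2}:k\in\{0,\dots,n\}\}$, hence its supremum over $(t,\alpha)\in\R\times[0,1]$ is attained. If that supremum were strictly less than $n(\gamma-\gamma^{2})$, then the expectation for every $T$ would be bounded above by the same strict upper bound, contradicting the limit just computed. Therefore some pair $(t,\alpha)$ realizes $|N_S-\gamma n|\geq\sqrt{n(\gamma-\gamma^{2})}$, proving the main inequality; choosing $\gamma=1/2$ yields $B\geq\sqrt{n}/2$. I expect the only mildly delicate step to be the Weyl equidistribution extension of Lemma \ref{lem:run1}; the rest is a direct expansion.
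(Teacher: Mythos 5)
Your proposal is correct. The core of your argument is the same second-moment computation as in the paper's proof: average $(N_S-\gamma n)^2$ over a random sector offset $\alpha$ and a random time, expand into diagonal terms contributing $\gamma-\gamma^2$ and off-diagonal terms contributing $0$, and finish with the observation that $N_S$ is integer-valued so a near-optimal value of the square forces the exact inequality $|N_S-\gamma n|\geq \sqrt{(\gamma-\gamma^2)n}$. Where you genuinely diverge is in the treatment of non-integer speeds. The paper first proves the statement for distinct \emph{integer} speeds (where everything is $1$-periodic in $t$, so $\EE_{t\in[0,1],\alpha}$ suffices, via Lemma \ref{lem:run1}), and then handles real speeds by Dirichlet's simultaneous approximation: rescale time by $q$, replace $v_i'$ by nearby integers $v_i$, and argue that for each $t$ the two counting functions differ only on an $O(\epsilon n)$-fraction of $\alpha$, so the second moment changes by at most $\epsilon c_n$; integrality of $N'$ then removes the $\epsilon$. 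You instead upgrade the covariance lemma directly to distinct real speeds by taking a long-time Ces\`aro average $\lim_{T\to\infty}\frac{1}{T}\int_0^T$: the $\alpha$-integral of the product of indicators is a fixed Riemann-integrable function $g$ of $\{(s_1-s_2)+(v_1-v_2)t\}$ with mean $\gamma^2$, and since $v_1-v_2\neq 0$ this quantity is exactly periodic (indeed uniformly distributed) in $t$, so the time average converges to $\gamma^2$; your final attainment argument via the finiteness of the value set of $(N_S-\gamma n)^2$ is sound. Your route buys a cleaner, one-shot treatment of real speeds that bypasses the Dirichlet approximation and the slightly delicate perturbation estimate; the paper's route keeps all averages over the compact window $t\in[0,1]$ and reuses the integer-speed Lemma \ref{lem:run1} verbatim, which is also the form needed again in the proof of Theorem \ref{thm:run0}. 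Both proofs close with the same integrality trick.
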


\begin{proof} Assume first that $v_1,\dots, v_n$ are distinct integers. 
Define $N(\alpha,t):= \sum_{i=1}^n \chi_{\alpha, \gamma}(s_i+v_i t)$; the number of runners in $S_{\alpha,\alpha+\gamma}$ at time $t$.
We shall abbreviate $\EE_{\alpha,t}$ by $\EE$.  We want to estimate $\E((N(\alpha,t)-\gamma n)^2)$. By the previous lemma,
we obtain $\E(N(\alpha,t))= \gamma n$. This implies $\E((N(\alpha,t)-\gamma n)^2)= \E((N(\alpha,t)^2)-\gamma^2 n^2$. 
Moreover, the lemma also gives \begin{eqnarray*}\E(\chi_{\alpha, \gamma}(s_i+v_it)\chi_{\alpha, \gamma}(s_j+v_jt))&=& \gamma\,, \hbox{ if } i=j\,, \\
&=& \gamma^2\,, \hbox{ if } i\not=j \,.\end{eqnarray*}
This means that 
\begin{align*}\E (N(\alpha,t)^2)=& \sum_{i=1}^n \E(\chi_{\alpha,\gamma}(s_i+v_it)^2)+ \sum_{i\not =j }\E(\chi_{\alpha,\gamma}(s_i+v_i t)\chi_{\alpha,\gamma}(s_j+v_jt))=\\
=& \gamma n+\gamma^2 n(n-1)=  (\gamma-\gamma^2)n + \gamma^2n^2\,.\end{align*}
Altogether, we obtain 
\[\E((N(\alpha,t)-\gamma n)^2)=(\gamma-\gamma^2)n + \gamma^2n^2-\gamma^2n^2= (\gamma-\gamma^2)n\,. \]
This means that for some $\alpha$ and $t$, $|N(\alpha,t)- \gamma n|\geq \sqrt{(\gamma-\gamma^2)n}$ which proves the special case of the theorem. 

For non-integer speeds, we will apply Dirichlet's approximation theorem. If $v_1',\dots, v_n'$ are distinct real numbers and $\epsilon>0$, the theorem gives a positive integer $q$ and integers $v_1,\dots, v_n$ such that $|v'_iq-v_i|\leq \epsilon$ for every $i$. Let $N'(\alpha,t):= \sum_{i=1}^n \chi_{\alpha, \gamma}(s_i+v_i' t)$ and $N(\alpha,t)$ be as above. We again want to estimate  $|N'(\alpha,t)-\gamma n|$. Since we can scale the time by a factor of $q$, we can assume that in fact $q=1$ and so $|v_i'-v_i|\leq \epsilon$ for every $i$. We observe that 
\[\E((N'(\alpha,t)-\gamma n)^2) \geq \E((N(\alpha,t)-\gamma n)^2) - \epsilon c_n\,,  \]
where $c_n$ is a constant depending only on $n$. This is because %$\int_{\alpha=0}^1|\chi_{\alpha,\alpha+\gamma}(p_i+v_it)-\chi_{\alpha,\alpha+\gamma}(p_i+v'_it)|\geq 2\epsilon$ and so 
$N(\alpha,t)$ and $N'(\alpha,t)$ differ on at most an $2\epsilon n$-fraction of $\alpha\in [0,1]$. %(and their absolute value is bounded by $n$). 
Hence we conclude that there exists $\alpha$ and time $t$ with $|N'(\alpha,t)- \gamma n|\geq \sqrt{(\gamma-\gamma^2)n}-\epsilon c'_n$. Since we can pick $\epsilon$ arbitrarily small and $N'(\alpha,t)$ is an integer, we conclude $|N'(\alpha,t)- \gamma n|\geq \sqrt{(\gamma-\gamma^2)n}$ for some $\alpha,t$.  
\end{proof}

We remark that the main part of Theorem \ref{thm:run1} fails miserably if the runners have non-distinct speeds. 
Consider $n=2k$ runners with $k$ distinct speeds running in pairs such that in a given pair, the two runners maintain distance $1/2$. Then we can set the starting positions so that for every sector $S$ of aperture $\gamma=1/2$ and every time, the number of runners in the sector is at least $k$ and at most\footnote{This is owing to the fact that $S$ is closed; a half-closed sector would contain precisely $k$ runners at every time.} $k+4$.     
 %For let $S$ be a half-sector with $\gamma=1/2$ and let us have $2n$ with $n$ distinct speeds such that the runners with the same speed run maintain distance 1/2 from each other. 

To prove Theorem \ref{thm:run0}, we need one more lemma:

\begin{lemma}\label{lem:run2}
Let $s_1,\dots,s_m \in \R$ and $\alpha\in [0,1]$. Let $N(\gamma):= \sum_{i=1}^m\chi_{\alpha,\gamma}(s_i)$.  Then \[\E_{\gamma}((N(\gamma)-\gamma m)^2)\geq \frac{1}{12}\,. \]
\end{lemma}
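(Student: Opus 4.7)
The plan is to compute the expectation $\E_\gamma\bigl((N(\gamma)-\gamma m)^2\bigr)=\int_0^1 (N(\gamma)-m\gamma)^2\,d\gamma$ directly in closed form and exhibit it as $\tfrac{1}{12}$ plus a nonnegative remainder. First I set $t_i:=\{s_i-\alpha\}$ for $i\in[m]$; then $\chi_{\alpha,\gamma}(s_i)=1$ iff $t_i\leq\gamma$ (up to a $\gamma$-measure-zero set), so $N(\gamma)=|\{i:t_i\leq\gamma\}|$. I order the $t_i$ as $t_{(1)}\leq\cdots\leq t_{(m)}$ and extend by $t_{(0)}:=0$, $t_{(m+1)}:=1$; on each open subinterval $(t_{(j)},t_{(j+1)})$ the function $N(\gamma)$ is constantly $j$, so the integrand collapses to the polynomial $(j-m\gamma)^2$.

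Next I integrate using the antiderivative $-\frac{1}{3m}(j-m\gamma)^3$ and sum over $j=0,\dots,m$. The resulting expression telescopes: after reindexing one half of each summand, the boundary contributions vanish (since $0-m\cdot t_{(0)}=0$ and $m-m\cdot t_{(m+1)}=0$), and the total collapses to
\[
\int_0^1 (N(\gamma)-m\gamma)^2\,d\gamma \;=\; \frac{1}{3m}\sum_{j=1}^{m}\bigl[(j-m t_{(j)})^3-\bigl((j-1)-m t_{(j)}\bigr)^3\bigr].
\]

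The last step is a symmetrization around the midpoint. Setting $c_j:=\tfrac{2j-1}{2m}-t_{(j)}$ rewrites the two cubes as $(mc_j+\tfrac12)^3$ and $(mc_j-\tfrac12)^3$, and the elementary identity $(x+\tfrac12)^3-(x-\tfrac12)^3=3x^2+\tfrac14$ yields
\[
\int_0^1 (N(\gamma)-m\gamma)^2\,d\gamma \;=\; \frac{1}{12}+m\sum_{j=1}^{m}\Bigl(\tfrac{2j-1}{2m}-t_{(j)}\Bigr)^2\;\geq\;\frac{1}{12},
\]
with equality precisely when the sorted $t_{(j)}$ are the equispaced points $(2j-1)/(2m)$. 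There is no real obstacle here: the argument is essentially the classical Cram\'er--von Mises identity for the empirical distribution of $t_1,\dots,t_m$ against the uniform distribution on $[0,1]$, and all the work is in the bookkeeping for the telescoping and in the cube identity used to center around the midpoint of each subinterval.
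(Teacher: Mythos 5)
Your proof is correct: the piecewise integration, the telescoping (using $t_{(0)}=0$, $t_{(m+1)}=1$), and the identity $(x+\tfrac12)^3-(x-\tfrac12)^3=3x^2+\tfrac14$ all check out, and they yield the exact Cram\'er--von Mises identity
$\int_0^1 (N(\gamma)-m\gamma)^2\,d\gamma = \tfrac{1}{12}+m\sum_{j=1}^m\bigl(\tfrac{2j-1}{2m}-t_{(j)}\bigr)^2$,
which indeed gives the bound. This is a genuinely different route from the paper's. The paper never looks at the positions $s_i$ at all: it only uses that $N(\gamma)$ is integer-valued, so $|N(\gamma)-\gamma m|\geq \Delta(\gamma m)$ where $\Delta$ is the distance to the nearest integer, and then computes $\int_0^1\Delta(\gamma m)^2\,d\gamma = \tfrac{1}{12}$ by periodicity and symmetry of $\Delta(\gamma m)$. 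That argument is shorter and makes transparent why the constant $\tfrac{1}{12}$ is position-independent, but it discards all information about where the points lie. Your computation buys more: it identifies the excess over $\tfrac{1}{12}$ as an explicit nonnegative quadratic penalty in the sorted points, and hence characterizes the equality case $t_{(j)}=(2j-1)/(2m)$ (equispaced points), which the paper's bound cannot see. Both proofs establish the same constant, so for the purposes of Theorem \ref{thm:run0} they are interchangeable.
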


\begin{proof} The function $N(\gamma)$ is integer-valued. Hence $|N(\gamma)-\gamma m|\geq \Delta(\gamma m)$, where $\Delta(z)\in [0,1/2]$ denotes the distance of $z\in \R$ from a closest integer. It is therefore enough to estimate $\int_{0}^1 \Delta(\gamma m)^2d\gamma$. 
The function $\Delta(\gamma m)$ is $1/m$-periodic and symmetric with respect to the point $\gamma_0=1/2m$. 
This means that $\int_{0}^1 \Delta(\gamma m)^2d\gamma= 2m \int_{0}^{1/2m} \Delta(\gamma m)^2d\gamma$. Furthermore, if $\gamma\in [0,1/2m]$ then $\Delta(\gamma m)=\gamma m$ and  hence 
\[\int_{0}^{1/2m} \Delta(\gamma m)^2d\gamma= m^2\int_{0}^{1/2m} \gamma^2d\gamma= \frac{m^2}{3(2m)^3}= \frac{1}{24m}\,.\]
This gives  $\int_{0}^1 \Delta(\gamma m)^2\geq 2m/24m=1/12$. 
\end{proof}

%\begin{theorem}\label{thm:run2} Let $v_1,\dots, v_n$ be real numbers such that $k$ of them are distinct.  Let $p_1,\dots, p_n\in [0,1]$ . Then  there exists $t\in \R$ such that $nD(p_1+v_1t, \dots, p_n+v_n t)\geq \sqrt{k/12}$. 
%\end{theorem}

\begin{proof}[{\bf Proof of Theorem \ref{thm:run0}}] We will assume that $v_1,\dots, v_n$ are integers; the general case proceeds in the same way as in the proof of  Theorem \ref{thm:run1}. Without loss of generality, assume that already $v_1,\dots,v_k$ are distinct. Given $j\in [k]$, let $A_j:=\{i\in [n]: v_i=v_j\}$ be the set of runners with speed $v_j$. Let 
$N_j(\alpha,\gamma,t):=  \sum_{i\in A_j} \chi_{\alpha, \gamma}(s_i+v_i t)$ and $N(\alpha,\gamma, t):= \sum_{i=1}^n \chi_{\alpha, \gamma}(s_i+v_i t)$. Hence $N(\alpha,\gamma, t)$ denotes the number of runners in $S_{\alpha,\alpha+\gamma}$ at time $t$ and $N(\alpha,\gamma,t)=\sum_{j=1}^kN_j(\alpha,\gamma,t)$. 
We want to estimate 
$\E((N(\alpha,\gamma,t)-\gamma n)^2)$, where $\E$ now stands for $\EE_{\alpha,\gamma,t}$.   

Setting $g_j(\alpha,\gamma, t):= N_j(\alpha,\gamma, t)- \gamma |A_j|$, we claim that \begin{align}
\E(g_j^2)&\geq 1/12\,,\,\, \nonumber \\ 
\E(g_{j_1}g_{j_2})&=0, \,\hbox{ if } j_1\not=j_2\,.
\label{eq:B}
\end{align}
The first inequality is a consequence of Lemma \ref{lem:run2}. (\ref{eq:B}) is an application of Lemma \ref{lem:run1} as follows. 
For a fixed $\gamma$, we have $\EE_{\alpha,t} N_j(\alpha,\gamma,t)= \gamma|A_j|$ which means that 
\[\E_{\alpha,t}(g_{j_1}g_{j_2})= \E_{\alpha,t}(N_{j_1}(\alpha,\gamma,t)N_{j_2}(\alpha,\gamma,t))-\gamma^2|A_{j_1}||A_{j_2}|\,.\]
Furthermore, by Lemma \ref{lem:run1},
\begin{align*}\E_{\alpha,t}(N_{j_1}(\alpha,\gamma,t)N_{j_2}(\alpha,\gamma,t))=& \sum_{i_1\in A_{j_1}, i_2\in A_{j_2}}\E_{\alpha,t}\chi_{\alpha,\gamma}(s_{i_1}+v_{j_1}t)\chi_{\alpha,\gamma}(s_{i_2}+v_{j_2}t)=\\ =& \gamma^2 |A_{j_1}||A_{j_2}|\,.\end{align*} This shows that the left-hand side of (\ref{eq:B}) indeed equals zero. 

We now have 
\begin{align*} \E((N(\alpha,\gamma,t)-\gamma n)^2)=&\E((\sum_{j=1}^k g_j)^2)=
 \sum_{j=1}^k\E(g_j^2)+\sum_{j_1\not =j_2}\E(g_{j_1}g_{j_2})
\geq  \frac{k}{12}\,.
\end{align*}
This implies that for some $\alpha, t, \gamma$, $|N(\alpha,\gamma,t)-\gamma n|\geq \sqrt{k/12}$. 
\end{proof}

\section{An upper bound on the discrepancy of runners}\label{sec:run2}
We now want to show that the bounds in Theorem \ref{thm:run0} and Theorem \ref{thm:run1} are tight up to logarithmic factors.

\begin{theorem}\label{thm:run2} Let $v_1:=1,\dots, v_n:=n$. There exist $s_1,\dots, s_n\in [0,1)$ such that for every $t\in \R$ and every $\gamma \in [0,1]$, the following holds. 
For every sector $S$ of aperture $\gamma$, $|N_S(s_1+v_1t,\dots, s_n+v_nt) -\gamma n| \leq O(\sqrt{n\gamma \log n}+\log n)$. Hence 
$B(s_1+v_1t,\dots, s_n+v_nt)\leq O(\sqrt{n\log n})$.
\end{theorem}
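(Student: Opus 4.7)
The plan is to use the probabilistic method: sample $s_1,\dots,s_n$ independently and uniformly from $[0,1)$ and show that the claimed uniform bound holds with positive probability. For any fixed triple $(t,\alpha,\gamma)\in[0,1)^3$, the random variables $\{s_i+it\}$ are independent uniforms on $[0,1)$ (since the $s_i$ are), so
\[
N \;:=\; N_{S_{\alpha,\alpha+\gamma}}(s_1+t,\dots,s_n+nt) \;=\; \sum_{i=1}^n \chi_{\alpha,\gamma}(s_i+it)
\]
is a sum of $n$ independent Bernoulli($\gamma$) indicators. Bernstein's inequality then yields
\[
\Pr\bigl(|N-\gamma n|>C\sqrt{\gamma n\log n}+C\log n\bigr)\le n^{-20}
\]
for a sufficiently large absolute constant $C$.

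Since the speeds are integers, the configuration is $1$-periodic in $t$, so it suffices to consider $t\in[0,1)$. I would discretise the parameter cube $[0,1)^3$ to a grid $G$ of spacing $\eta:=n^{-5}$, giving $|G|\le n^{15}$ points, and apply a union bound: with probability at least $1-n^{-5}$ the above deviation bound holds simultaneously at every grid triple $(t_0,\alpha_0,\gamma_0)$.

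To pass from the grid to arbitrary continuous parameters, note that moving $t$ by $\eta$ shifts each position $p_i:=s_i+it$ by at most $n\eta=n^{-4}$, while moving $\alpha$ or $\gamma$ by $\eta$ shifts the corresponding sector endpoint by at most $\eta$. Hence if $(t_0,\alpha_0,\gamma_0)$ is the nearest grid point to $(t,\alpha,\gamma)$, the difference $|N_S-N_{S_0}|$ is bounded by the number of runners (at time $t_0$) lying in two slabs of width $O(n^{-4})$ around the endpoints of $S_0$. Each such slab is itself a sector of tiny aperture, so the same Bernstein inequality applied on $G$ yields that w.h.p.\ every such slab contains only $O(\log n)$ runners, which is absorbed into the $O(\log n)$ term of the theorem.

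The main technical obstacle will be this last step: making the passage from the discrete grid to continuous parameters rigorous while still matching the shape of the stated bound $O(\sqrt{n\gamma\log n}+\log n)$ rather than losing a polynomial factor. The key point that makes it work is that the finite family of thin ``boundary slabs'' induced by the grid can be controlled by the very same concentration inequality used for the main counts, so no new estimate is needed. Once all pieces are combined, some realisation of $s_1,\dots,s_n$ witnesses the sector bound uniformly in $(t,\alpha,\gamma)$, and maximising over $\gamma\in[0,1]$ gives $B\le O(\sqrt{n\log n})$ as required.
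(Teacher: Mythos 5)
Your proposal is correct and follows essentially the same route as the paper's proof: random starting positions, a Chernoff/Bernstein concentration bound for each fixed time and sector, a union bound over a discretized family of times and sectors, and a passage to continuous parameters in which the boundary error is controlled by the count in thin sectors (itself bounded by $O(\log n)$ via the same concentration estimate). The only differences are cosmetic (a finer grid of size $n^{15}$ instead of the paper's $\le n^4$ events, and Bernstein with the built-in $+\log n$ term instead of Chernoff restricted to apertures $\ge 4\log n/n$), so no further comment is needed.
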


Clearly, this implies a similar bound in the general case of non-distinct speeds.  If $k\leq n$, set $v_1,\dots, v_n$ so that $v_1=1,\dots, v_k=k $ and $v_{k+1},\dots, v_n=k$. Applying Theorem \ref{thm:run2} and setting the starting positions of the last $n-k+1$ runners so that they uniformly partition the circle, we have for every $t$
\[B(s_1+v_1t,\dots, s_n+v_nt)\leq O(\sqrt{k\log k})\,.\]

\begin{proof}[Proof of Theorem \ref{thm:run2}] 
Pick $s_1,\dots, s_n\in [0,1]$ uniformly and independently at random. Let $N_S(t):= N_S(s_1+v_1t,\dots, s_n+v_nt)$ be the number of runners in $S$ at time time $t$. 
We claim that for every fixed $t$ and a fixed sector $S$ of aperture $\gamma\geq 4 \log n/n$,
\begin{equation} \label{eq:chernoff} {\displaystyle \Pr_{s_1,\dots, s_n}}\left [ |N_S(t)-\gamma n| \geq 4\sqrt{n\gamma\log n}\right ]\leq n^{-5}\,.
\end{equation}
For if $s_1,\dots, s_n$ are uniform and independent, so are $s_1+v_1 t,\dots,s_n+v_nt$. The expected value of $N_S(t)$ is $\mu:= \gamma n$. Chernoff bound gives that for every $0\leq \delta \leq 1$,
\[\Pr[N_S(t)\leq (1-\delta)\mu] \leq e^{-\delta^2\mu/2}\,,\,\, \Pr[N_S(t)\geq (1+\delta)\mu] \leq e^{-\delta^2\mu/3}\,.\]
If we now set $\delta := 4\sqrt{\frac{\log n}{\gamma n}} $, we have $\delta\mu= 4\sqrt{n\gamma \log n}$ and $\delta^2\mu= 16\log n$. Hence both the probabilities in (\ref{eq:chernoff}) are at most $e^{-16\log n/3}\leq n^{-5}$. 

Let $m:= \lfloor n/(4 \log n)\rfloor$ and $\gamma_0:= 1/m$. Let ${\cal S}$ be the set of sectors of the form $S_{i\gamma_0, i\gamma_0+j\gamma_0}$, $i,j\in \{0,\dots, m-1\}$. That is, ${\cal S}$ consists of  the $m^2$ sectors  whose starting point and aperture is a multiple of $\gamma_0$. Let ${\cal T}$ be the set of times of the form $ k/ {nm}$, $k\in \{0,\dots, nm+1\}$. Since $m<n$, we have $|{\cal S}|\cdot |{\cal T}|\leq n^4$. Then (\ref{eq:chernoff}) and the union bound give that, with positive probability,  $\left|N_{S}- |S|\cdot n\right| \leq 4\sqrt{n|S|\log n}$ holds for every $S\in {\cal S}$ and $t\in {\cal T}$. 

Hence there exist $s_1,\dots, s_n$ so that
\begin{equation}\label{eq:forall} |N_{S}- |S|\cdot n| \leq 4\sqrt{n|S|\log n}\,, \,\, \hbox{ for all } S\in {\cal S}, t\in {\cal T}\,. \end{equation}
We first claim that this can be extended also to times not in ${\cal T}$: 
\begin{equation}\label{eq:forall2} |N_{S}- |S|\cdot n| \leq 4\sqrt{n|S|\log n}+O(\log n)\,, \,\, \hbox{ for all } S\in {\cal S}, t\in \R\,. \end{equation}
%$S\in {\cal S}$, $N_{S}}- |S| n| \leq c_1\sqrt{n|S|\log n}$ holds for every $t\in \R$ and a suitable constant $c_1$. 
For, given a ''small" sector $S_0$ in $\cal S$ of aperture $\gamma_0$, (\ref{eq:forall}) tells us that $N_{S_0}\leq 8\log n+1$. Between two consecutive times $t_1=k/nm$ and $t_2=(k+1)/nm$ in $\cal T$, the fastest runner with speed $n$ covers distance $n/nm=\gamma_0$. This  means that the runners that come to or leave from a sector $S$ must come from, or move to, the at most two adjacent small sectors of aperture $\gamma_0$.
% This means that for every $t\in [t_1,t_2]$, 
%\[\left|N_S(t)- |S|n\right|\leq 4\sqrt{n|S|\log n}+16 \log n \leq 12 \sqrt{n|S|\log n}\,.\] 
%Since $N_S(t)$ is 1-periodic, we can conclude that (\ref{eq:forall2}) holds. 
      In a similar fashion, we can extend (\ref{eq:forall2}) to all sectors of aperture $\gamma \in [0,1]$. For given such a sector $S$, we can find $S_1\in {\cal S}\cup\{\emptyset\}$ and $S_2\in\cal S$ with $S_1\subseteq S\subseteq S_2$ and apertures satisfying  $|S_1|\geq \gamma-2\gamma_0$, $|S_2|\leq \gamma+2\gamma_0$. 
%$\gamma_0\lfloor \gamma/\gamma_0 \rfloor$ and  $\gamma_0\lceil \gamma/\gamma_0 \rceil$, respectively, with $S_1\subseteq S\subseteq S_2$.  
%We then have
%\[N_S(t) \leq N_{s_2}(t) \leq \gamma_2 n +c_1\sqrt{n\gamma_2\log n}\leq \gamma n+ \gamma_0 n+ c_2 \sqrt{n\gamma \log n}\leq \gamma n+ c_3 \sqrt{n\gamma \log n}\]
%which gives  $N_S(t)- \gamma n\leq c_3 \sqrt{n\gamma \log n}$. The bound $N_S(t)- \gamma n\geq - c_4 \sqrt{n\gamma \log n}$ is proved similarly. (Note that if $n/\log n\leq \gamma\leq 2\gamma_0$, the trivial estimate $N_S(t)\geq 0$ suffices). 
%This proves the main part of the theorem. To conclude the bound on discrepancy, use the main part for $\gamma\geq \log n/n$ and note that for every sector of aperture $\gamma< \log n/n$, we must have $|N_S-\gamma n |\leq O(\log n)$. 
\end{proof}

\subsection{An explicit construction}\label{sec:explicit}
It would be interesting to give an explicit construction of low-discrepancy runners, and we now make a step in this direction. We will use the  Erd\"os-Tur\'an inequality \cite{Erdos-Turan} which is a useful tool for bounding discrepancy. We also note that the inequality would somewhat simplify the proof of Theorem \ref{thm:run2} (see also \cite{ErdosRenyi}), at the cost of obtaining weaker bounds.

It is convenient to interpret the discrepancy of runners in terms of norms of complex polynomials. Let $f(x)=\sum_i a_i x^i$ be a complex polynomial. Let 
$|f|_m:= \max_{|x|=1}|f(x)|$
be its maximum on the unit complex circle. Furthermore, let 
$f^{(k)}(x):= \sum_i a_i^k x^i$
be the Hadamard power of $f$.  The following lemma is a straightforward adaptation of the Erd\"os-Tur\'{a}n inequality to our setting:

\begin{lemma}\label{lem:ET} Let $v_1,\dots,v_n$ be distinct non-negative integers, $s_1,\dots,s_n\in [0,1)$ and $f(x):= \sum_{i=1}^{n} e^{2\pi \imath  s_{i}} x^{v_i}$. Then for every $t\in \R$,
\[B(s_1+ v_1t,\dots, s_n+v_n t)\leq c\left(1+\sum_{k=1}^n \frac{|f^{(k)}(x)|_m}{k}\right)\,,\]
where $c>0$ is an absolute constant. 
\end{lemma}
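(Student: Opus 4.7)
The plan is a direct reduction to the classical Erd\H{o}s--Tur\'an inequality, which for any real numbers $r_1,\dots,r_n$ and any positive integer $K$ bounds the discrepancy by
\[
D(r_1,\dots,r_n)\;\le\;\frac{c_0}{K}\;+\;\sum_{k=1}^{K}\frac{c_1}{k}\,\Bigl|\frac{1}{n}\sum_{j=1}^{n}e^{2\pi\imath k r_j}\Bigr|\,,
\]
for absolute constants $c_0,c_1>0$. Multiplying through by $n$ and recalling $B=nD$ converts this into a bound on the bias with $n/K$ in the leading term and the exponential sums unnormalised. I would then specialise to $r_j=s_j+v_jt$ and take $K=n$, so that the leading term becomes the constant $1$ appearing in the statement.

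The remaining step is to identify each time-dependent exponential sum with an evaluation of a Hadamard power of $f$ on the unit circle. Since the coefficients of $f$ are $a_j=e^{2\pi\imath s_j}$, a short computation gives
\[
f^{(k)}\bigl(e^{2\pi\imath k t}\bigr)\;=\;\sum_{j=1}^{n}e^{2\pi\imath k s_j}\bigl(e^{2\pi\imath k t}\bigr)^{v_j}\;=\;\sum_{j=1}^{n}e^{2\pi\imath k(s_j+v_jt)}\,.
\]
The argument $e^{2\pi\imath k t}$ lies on the unit circle, so the modulus of this sum is at most $|f^{(k)}|_m$. Substituting these estimates into the (rescaled) Erd\H{o}s--Tur\'an inequality yields the claim, with $c:=\max(c_0,c_1)$.

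I do not expect a real obstacle here: the lemma is essentially a bookkeeping exercise once one spots that the time-dependent exponential sum for frequency $k$ is captured by the $k$-th \emph{Hadamard} power $f^{(k)}$ (coefficients raised to the $k$-th power), as opposed to the ordinary power $f^{k}$. The one delicate point is that the correct evaluation point is $e^{2\pi\imath k t}$ rather than $e^{2\pi\imath t}$; this is exactly what is needed for the phases $k s_j$ and $k v_j t$ to combine. The distinctness of the integers $v_j$ plays no role in the computation itself; it is used only to ensure that $f$ and each $f^{(k)}$ have a well-defined list of coefficients.
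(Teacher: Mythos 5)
Your proposal is correct and follows essentially the same route as the paper: apply the classical Erd\H{o}s--Tur\'an inequality (with $K=n$, rescaled by $n$ to pass from discrepancy to bias) and observe that $\sum_j e^{2\pi\imath k(s_j+v_jt)} = f^{(k)}(e^{2\pi\imath kt})$, whose modulus is at most $|f^{(k)}|_m$ since the evaluation point lies on the unit circle. The delicate points you flag (Hadamard versus ordinary power, and the evaluation point $e^{2\pi\imath kt}$) are exactly the ones the paper's argument relies on.
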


In order to apply Lemma \ref{lem:ET}, we want to find a polynomial $f(x)$ with unimodular coefficients such that $|f^{(k)}(x)|_m$ is small for every $k\leq n$. 
%We now want to construct an explicit polynomial with similar properties. 
   Our construction is a generalization of that of Shapiro polynomials, see, e.g., \cite{Shap}. Shapiro's construction  gives a polynomial $f(x)$ with $\pm 1$ coefficients and degree $d=2^n-1$ such that $|f(x)|_m\leq 2^{(n+1)/2}= \sqrt{2(d+1)}$.

Let us fix a prime $p$. Let $\xi$ be a $p$-th primitive root of unity. Let $D$ be the $p\times p$ (unnormalized) discrete Fourier transform matrix, $D_{j,k}= \xi^{jk}, j,k \in \{0,\dots, p-1\}$.
Recursively, we construct a $p$-tuple of polynomials $Q_{0,r},\dots, Q_{p-1,r}$. We set $Q_{0,0}:=1,\dots, Q_{p-1,0}:= 1$. If $r\geq 0$,
we let 
\begin{equation}\label{eq:shapiro}\left( \begin{array}{l} Q_{0,r+1}\\ Q_{1,r+1}\\ \vdots \\Q_{p-1,r+1}\end{array} \right) = D \cdot \left( \begin{array}{r} Q_{0,r}\\ x^{p^r}Q_{1,r}\\ \vdots \\x^{(p-1)p^r}Q_{p-1,r}\end{array} \right)\,.\end{equation}
The construction guarantees that every $Q_{i,r}$ has degree $d_r=p^r-1$ and that its coefficients have absolute value one.

For example, in the case $p=2$, we obtain the usual Shapiro polynomials. The definition is simplified to
\[Q_{0,0}\,,Q_{1,0}=1\,,\,\,\,\,\, \left( \begin{array}{l} Q_{0,r+1}\\ Q_{1,r+1}\end{array} \right) = \left(\begin{array}{l r} 1&1\\ 1&-1\end{array} \right) \cdot  \left(\begin{array}{r} Q_{0,r}\\ x^{2^r}Q_{1,r}\end{array} \right)\,,\]
and gives the sequence:
\[ \begin{array}{l} 1\\ 1\end{array} \,,\,\, \begin{array}{l} 1+x\\ 1-x\end{array} \,,\,\,  \begin{array}{l} 1+x+x^2-x^3\\ 1+x-x^2+x^3\end{array} \,, \,\,  \begin{array}{l} 1+x+x^2-x^3+ x^4+x^5-x^6+x^7\\ 1+x+x^2-x^3- x^4-x^5+x^6-x^7\end{array} \,,\, \dots\]

We can bound $|Q_{i,r}^{(k)}|_m$ as follows. 

\begin{prop}\label{prop:shapiro} Let $0\leq i\leq p-1$ and $k$ be a natural number such that $p$ does not divide $k$. Then $|Q_{i,r}^{(k)}|_m\leq p^{\frac{r+1}{2}}= \sqrt{p(d_r+1)}$. 
\end{prop}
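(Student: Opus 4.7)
The plan is to exploit the fact that the DFT matrix $D$ is (up to scaling) unitary, and that this unitarity persists under the Hadamard $k$-th power precisely when $p \nmid k$. First I would note, by an easy induction on $r$, that every coefficient of every $Q_{i,r}$ is a $p$-th root of unity: the claim is trivial for $r=0$, and the recurrence (\ref{eq:shapiro}) multiplies each coefficient of $Q_{j,r}$ by $\xi^{ij}$, which preserves the property. This is the feature that makes Hadamard $k$-th powers well-behaved.

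The key identity I would establish is
\[ Q_{i,r+1}^{(k)}(x) \;=\; \sum_{j=0}^{p-1} \xi^{ijk}\, x^{jp^r}\, Q_{j,r}^{(k)}(x)\,. \]
This follows from the recurrence because the $p$ summands $\xi^{ij} x^{jp^r} Q_{j,r}(x)$ have pairwise disjoint monomial supports: $x^{jp^r} Q_{j,r}$ lives in degrees $[jp^r,(j{+}1)p^r-1]$, and these intervals partition $\{0,\dots,p^{r+1}-1\}$. Since a Hadamard $k$-th power distributes over sums with disjoint support, and since $(c\cdot x^m f)^{(k)}=c^k x^m f^{(k)}$, the identity drops out.

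The heart of the argument is to bound $S_r(x):= \sum_{i=0}^{p-1}|Q_{i,r}^{(k)}(x)|^2$ on the unit circle. When $p\nmid k$, the element $\eta:=\xi^k$ is still a primitive $p$-th root of unity, so the matrix $M$ with $M_{ij}=\xi^{ijk}=\eta^{ij}$ is a DFT matrix obeying $MM^*=pI$. Applying this to the vector $\mathbf{v}(x):=(x^{jp^r}Q_{j,r}^{(k)}(x))_{j=0}^{p-1}$ — whose $j$-th entry has modulus $|Q_{j,r}^{(k)}(x)|$ when $|x|=1$ — the identity above gives $S_{r+1}(x)=\|M\mathbf{v}(x)\|^2=p\,\|\mathbf{v}(x)\|^2=p\,S_r(x)$ pointwise. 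Since $S_0(x)=p$ (all $Q_{i,0}=1$), induction yields $S_r(x)=p^{r+1}$ for every $|x|=1$, and then $|Q_{i,r}^{(k)}(x)|^2\le S_r(x)=p^{r+1}$ gives the desired bound $|Q_{i,r}^{(k)}|_m\le p^{(r+1)/2}$.

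The only genuinely delicate point is the Hadamard distributivity step: if the supports of the summands overlapped, the $k$-th power would introduce cross-terms and the whole argument would collapse. The hypothesis $p \nmid k$ is used in exactly one place, namely to ensure that $M$ is unitary (up to $\sqrt{p}$); if $p\mid k$, then $\xi^{ijk}=1$ identically and $M$ degenerates into a rank-one matrix, consistent with the fact that $f^{(p)}$ can be much larger in sup-norm than $f$. Apart from these two points, the proof is a straightforward iteration of the pointwise identity $S_{r+1}=p\,S_r$.
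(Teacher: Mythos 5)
Your proof is correct and follows essentially the same route as the paper: the pointwise identity $\sum_{i=0}^{p-1}|Q_{i,r}^{(k)}(x)|^2=p^{r+1}$ on $|x|=1$, proved by induction using the fact that the DFT-type matrix with entries $(\xi^k)^{ij}$ satisfies $MM^*=pI$ whenever $p\nmid k$. The only difference is that you make explicit, via the disjoint monomial supports of the summands in (\ref{eq:shapiro}), why the Hadamard powers $Q_{i,r}^{(k)}$ obey the same recursion with $\xi$ replaced by $\xi^k$ --- a step the paper simply asserts --- which is a welcome clarification but not a different argument.
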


  \begin{proof} Assume first that $k=1$. We will prove that for every $x$ with $|x|=1$,
  \begin{equation}\label{eq:shapiro2} |Q_{0,r}(x)|^2+\dots+|Q_{p-1,r}(x)|^2= p^{r+1}\,. \end{equation}
  This is by induction on $r$. If $r=0$, the statement is clear. For the inductive step, let $Q_{r+1}(x)$ be the vector on the left-hand side of (\ref{eq:shapiro}) and $P_r(x)$ the one on the right-hand side, 
%  let \[Q_{n}(x):=\left( \begin{array}{c} Q_{0,n}(x)\\ Q_{1,n}(x)\\ \vdots \\Q_{p-1,n+1}\end{array} \right)\,,\,\,  P_n(x):= \left( \begin{array}{c} Q_{0,n}(x)\\ x^{p^n}Q_{1,n}(x)\\ \vdots \\x^{(p-1)p^n}Q_{p-1,n}\end{array} \right)\,,\]
  so that $Q_{r+1}(x)= DP_r(x)$. 
  %Let $Q_n(x)$ be the column vector $(Q_{0,n}(x),\dots, Q_{p-1,n}(x))^t$.
   For $u\in \C^p$ let $| u|$ be its Euclidean norm. The matrix $D$ satisfies $D\cdot \bar D^t = p I_p $.  This means that for every $u\in \C^p$, 
$|Du|^2=  p |u|^2$. 
 Hence we have $|Q_{r+1}(x)|^2= p|P_r(x)|^2$ for every $x\in \C$. Furthermore, if $|x|=1$, we have $|P_r(x)|^2= |Q_r(x)|^2$ and hence $|Q_{r+1}(x)|^2= p|Q_r(x)|^2$. This implies (\ref{eq:shapiro2}).  

Equality (\ref{eq:shapiro2}) gives $|Q_{i,r}|_m\leq p^{(r+1)/2}$ as required. Let $k$ be such that $p$ does not divide $k$. Then $Q^{(k)}_{i,r}$ satisfy the same recursive definition, 
except that the root $\xi$ is replaced with $\xi^k$, and the same conclusion holds.     
    \end{proof}
    
 In order to obtain low-discrepancy runners from Proposition \ref{prop:shapiro}, it is enough to take the polynomial $Q_{0,r}$ for a suitable $r$. It turns out that $r=3$ gives optimal parameters in this setting.\footnote{Hence the recursion (\ref{eq:shapiro}) is applied $3$ times. Note, however, that  $Q_{0,3}$ implicitly depends on the chosen prime $p$ (and the root $\xi$).}   

\begin{cor}\label{cor:shapiro} Let $p$ be a prime and $n:= p^3$. Let $s_0,\dots, s_{n-1}\in [0,1)$ be such that $Q_{0,3}=\sum_{j=0}^{n-1} e^{2\pi\imath  s_j}x^j$. Then 
$B(s_0,s_1+t, \dots, s_{n-1}+(n-1) t)\leq O(n^{2/3}\log n)$ for every $t\in \R$.
\end{cor}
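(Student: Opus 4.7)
The plan is to apply the Erd\H{o}s--Tur\'an estimate (Lemma~\ref{lem:ET}) to $f:=Q_{0,3}$, which has degree $d_3=n-1$ and unimodular coefficients $e^{2\pi\imath s_j}$. With the distinct non-negative integer speeds $v_j=j$, $j=0,\dots,n-1$, the lemma yields, for every $t\in\R$,
\[
B(s_0,s_1+t,\dots,s_{n-1}+(n-1)t)\;\le\; c\Bigl(1+\sum_{k=1}^{n}\tfrac{|Q_{0,3}^{(k)}|_m}{k}\Bigr),
\]
so it suffices to bound the sum on the right by $O(n^{2/3}\log n)$. The natural split is according to whether $p\mid k$.

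When $p\nmid k$, Proposition~\ref{prop:shapiro} applies and gives $|Q_{0,3}^{(k)}|_m\le p^{2}=n^{2/3}$ directly, so the contribution of these indices is at most $n^{2/3}\sum_{k=1}^{n}1/k = O(n^{2/3}\log n)$.

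The case $p\mid k$ is where Proposition~\ref{prop:shapiro} does not apply and is the main obstacle. The observation to exploit is that the recursion \eqref{eq:shapiro} for $Q_{i,r}^{(k)}$ is identical to the one for $Q_{i,r}$ except that $\xi$ is replaced by $\xi^k$; since $p\mid k$ forces $\xi^k=1$, the DFT matrix $D$ degenerates to the all-ones $p\times p$ matrix. A straightforward induction on $r$ then shows that all $Q^{(k)}_{i,r}$ coincide and satisfy $Q^{(k)}_{i,r+1}(x)=Q^{(k)}_{i,r}(x)\cdot\bigl(1+x^{p^r}+\dots+x^{(p-1)p^r}\bigr)$. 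Telescoping,
\[
Q_{0,3}^{(k)}(x) \;=\; \prod_{j=0}^{2}\frac{x^{p^{j+1}}-1}{x^{p^{j}}-1} \;=\; \frac{x^{n}-1}{x-1} \;=\; 1+x+\dots+x^{n-1},
\]
so trivially $|Q_{0,3}^{(k)}|_m\le n$. Since there are only $n/p=p^2$ multiples of $p$ in $[1,n]$, their reciprocals sum to $\tfrac{1}{p}\sum_{m=1}^{n/p}1/m=O(n^{-1/3}\log n)$, and this case contributes $n\cdot O(n^{-1/3}\log n)=O(n^{2/3}\log n)$ to the overall sum.

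Adding the two contributions yields the claimed bound $B\le O(n^{2/3}\log n)$. The choice $r=3$ is calibrated precisely to balance these two terms: with $n=p^r$, Proposition~\ref{prop:shapiro} bounds the good $k$ by $p^{(r+1)/2}$ while the bad $k$ contribute roughly $n/p$; equating $p^{(r+1)/2}\sim n/p$ forces $r=3$, matching the footnote in the statement. The delicate point throughout is really the bad-$k$ analysis: the trivial bound $n$ is just barely absorbed by the sparsity $1/p=n^{-1/3}$ of the multiples of $p$.
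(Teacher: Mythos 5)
Your proposal is correct and follows essentially the same route as the paper: apply Lemma~\ref{lem:ET} to $Q_{0,3}$, bound the terms with $p\nmid k$ by $n^{2/3}$ via Proposition~\ref{prop:shapiro}, and bound the terms with $p\mid k$ by $n$, with the same $\frac{n}{p}\sum_{a\le n/p}1/a$ arithmetic. The only (harmless) difference is that for $p\mid k$ the paper simply uses the trivial bound $|Q_{0,3}^{(k)}|_m\le n$ from unimodularity of the coefficients, whereas you derive it by explicitly identifying $Q_{0,3}^{(k)}$ with $1+x+\dots+x^{n-1}$ -- a correct but unnecessary refinement.
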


\begin{proof} By Lemma \ref{lem:ET}, it is enough to estimate $A:= \sum_{k=1}^n \frac{|Q_{0,3}^{(k)}|_m}{k}$. If $p\nmid ~k$, we have $|Q_{0,3}^{(k)}|_m \leq p^2= n^{2/3}$ by Proposition \ref{prop:shapiro}. Hence 
\[\sum_{k\leq n, p\nmid ~k} \frac{|Q_{0,3}^{(k)}(x)|_m} {k}\leq \sum_{k=1}^n \frac{n^{2/3}}{k}\leq O(n^{2/3}\log n)\,. \]
If $p$ divides $k$, we have $|Q_{0,3}^{(k)}|_m \leq n$. Hence   
\[\sum_{k\leq n, p | k} \frac{|Q_{0,3}^{(k)}(x)|_m} {k}\leq  \sum_{k\leq n, p |k} \frac{n} {k}= \frac{n}{p} \sum_{a=1}^{n/p} \frac{1}{a}\leq O(n^{2/3}\log n)\,. \]
This gives the estimate $A\leq O(n^{2/3}\log n)$. 
\end{proof}

\section{Newton polytopes and angular distribution of zeros}

In this section, we prove Theorems \ref{thm:Newt} and \ref{thm:cor}. 

\subsection{The connection with runners}\label{sec:connection}
We start by discussing the connection between Theorem \ref{thm:Newt} and the discrepancy of runners. Let $f(x,y)$ be a polynomial
of the form
\begin{equation}\label{eq:poly} f(x,y) =  x^{m_1}y^{m_2}\prod_{i=1}^n(x-a_i y^{q_i} )\,,
\end{equation}
where $a_i= r_ie^{2\pi \imath  s_i}$, $r_i>0$, $s_i\in [0,1)$, and $m_1,m_2, q_1,\dots, q_n\in \Z$ with $m_1,m_2\geq 0$.   
Let $k:= |\{q_1,\dots, q_n\}|$ be the number of distinct $q_i$'s. %Since the Newton polytope of a product is the Minkowski sum of Newto
Then $\Newt(f(x,y))$ has precisely $2k$ vertices (if $k>0$). This is because the Newton polytope of a product $g_1g_2$  is the Minkowski sum of Newton polytopes of the factors $g_1$ and $g_2$ (see, e.g., \cite{KoiranNewt} or references within). Hence, $\Newt(f(x,y))$ is the Minkowski sum of line segments (and a point) with precisely $k$ distinct gradients, which yields $2k$ vertices.     
Given $a= r e^{2\pi \imath  t }$, $r>0$, the non-zero roots of $f(x,a)$ are of the form $r^{q_j}r_j e^{2\pi\imath (s_j+ q_j t) }$, $j\in [n]$. Hence, as $t$ varies, their arguments 
are \[2\pi (s_1+q_1 t)\,, \dots, 2\pi (s_n+q_n t)\,, \]
and they can be seen as a system of runners on a circle of length one with speeds $q_1,\dots, q_n$. Using Theorem \ref{thm:run0}, these observations entail:
\begin{equation*}\label{eq:polyrun} B(f(x,a))\geq \Omega(\sqrt{k})\,,\,\hbox{for some } a \hbox{ with } |a|=1\,.
\end{equation*}
Conversely, a system of runners $s_1+v_1 t, \dots, s_n+v_n t$, $v_1,\dots,v_n\in \N$, can be associated with the bivariate polynomial 
\[ g(x,y)= \prod_{j=1}^n (x- e^{2\pi \imath  s_j} y^{v_j})\,.\]
This and Theorem \ref{thm:run2} implies: 

\begin{prop}\label{prop:Newt} For every $n$, there exists $g(x,y)$ whose Newton polytope has $2n$ vertices but for every $a\in \C\setminus \{0\}$,
$B(g(x,a))\leq O(\sqrt{n\log n})$. 
\end{prop}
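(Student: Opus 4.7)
The plan is to invoke directly the construction suggested at the end of Section \ref{sec:connection} and feed it the low-discrepancy starting positions supplied by Theorem \ref{thm:run2}. First I would apply Theorem \ref{thm:run2} with speeds $v_j := j$ to obtain $s_1,\dots,s_n \in [0,1)$ satisfying $B(s_1+t,\, s_2+2t,\, \ldots,\, s_n+nt) \leq O(\sqrt{n\log n})$ for every $t \in \R$, and then define
\[ g(x,y) := \prod_{j=1}^n \left( x - e^{2\pi \imath s_j} y^j \right). \]

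Next I would check that $\Newt(g)$ has exactly $2n$ vertices. As recalled in Section \ref{sec:connection}, the Newton polytope of a product is the Minkowski sum of the Newton polytopes of the factors. Each factor $x - e^{2\pi\imath s_j} y^j$ has Newton polytope equal to the line segment from $(1,0)$ to $(0,j)$, and since these $n$ segments have pairwise distinct slopes, their Minkowski sum is a centrally symmetric $2n$-gon. Then for any $a \in \C \setminus \{0\}$, I would write $a = r e^{2\pi \imath t}$ with $r > 0$ and $t \in \R$; the $n$ non-zero roots of the univariate polynomial $g(x,a)$ are $r^j e^{2\pi\imath(s_j + jt)}$ for $j \in [n]$, so their normalized arguments are exactly $\{s_j + jt\}_{j=1}^n$. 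Therefore
\[ B(g(x,a)) \;=\; B(s_1 + t,\, s_2 + 2t,\, \ldots,\, s_n + nt) \;\leq\; O(\sqrt{n\log n}), \]
uniformly in $a$, as required.

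There is no serious obstacle here: the proposition is essentially a translation of Theorem \ref{thm:run2} through the dictionary set up in Section \ref{sec:connection}. The only line that needs its own short argument is the $2n$-vertex count for $\Newt(g)$, and that follows immediately from the Minkowski-sum description already invoked just above the proposition; the crucial point is just that the slopes $1/j$ of the $n$ segments are pairwise distinct, so no two segments collapse into a single edge of the sum.
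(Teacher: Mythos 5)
Your proposal is correct and is essentially the paper's own proof: it takes $g(x,y)=\prod_{j=1}^n(x-e^{2\pi\imath s_j}y^{j})$ with the starting positions from Theorem \ref{thm:run2}, counts the $2n$ vertices via the Minkowski-sum argument from Section \ref{sec:connection}, and translates $B(g(x,a))$ back into the runners' bias. (Only a cosmetic slip: the segment from $(1,0)$ to $(0,j)$ has gradient $-j$, not $1/j$, but the gradients are still pairwise distinct, which is all you need.)
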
 

We remark that $g$ can be assumed to have \emph{real} coefficients by taking instead $ g\cdot \bar g= \prod_{j=1}^n (x^2- 2 \cos(2\pi s_j)  x y^{v_j} +y^{2v_j})$. 
Furthermore, Theorem \ref{thm:run2} gives more information: for example, every small sector of aperture $O(\log n/n)$ contains at most $O(\log n)$ runners. The same could be
said about the roots of $g(x,a)$.

\subsection{Proof of Theorem \ref{thm:Newt}} \label{sec:NewtProof}
Our goal is to deduce Theorem \ref{thm:Newt} from Theorem \ref{thm:run0}. The strategy is to approximate $f(x,y)$ by  polynomials corresponding to edges of its Newton polytope -- hence reducing the problem to the already understood case as in (\ref{eq:poly}). As pointed out by an anonymous referee, the following proof is similar to the proof of Newton-Puiseux theorem (see, e.g., \cite{Walker}). The theorem expresses roots of $f(x,y)=0$, when viewed as a polynomial in $x$, as Puiseux series in $y$. Moreover, the first approximation of the series is given by monomials on the boundary  $\Newt(f(x,y))$, which would lead to an alternative proof of Theorem \ref{thm:Newt} from Theorem \ref{thm:run0}.    

Let $g(x,y)$ be a polynomial such that $\Newt(g)$ lies on the line $\ell= \{(t, qt+m): t\in \R\}$, $q\in \Q$. Then $g(x,y)$ can be written as
\begin{equation}\label{eq:g} g(x,y)=  y^m \sum_{j=n_1}^{n_2} c_j x^{j}y^{q j}\,.
\end{equation}
Furthermore, if $q\in \Z$, $g(x,y)$ can be factored as  
\begin{equation}\label{eq:line} g(x,y)= a_0y^{m}\prod_{i=1}^n(xy^q-a_i )\,,
\end{equation} 
where $a_0,\dots, a_n\in \C$.  

\begin{lemma}\label{lem:rouche} Let $g(x,y)$ be as in (\ref{eq:line}) with $a_0\not=0$. Let $h(x,y)$ be a polynomial such that $\Newt(h)$ lies in the strict upper-half plane determined by $\ell$. Then for every $\epsilon>0$ sufficiently small, and every $a\in \C$ with $0< |a|$ sufficiently small with respect to $\epsilon$, the following holds.  Let $\xi$ be a non-zero root of $g(x,a)$ of multiplicity $p$. 
Then $g(x,a)+h(x,a)$ has precisely $p$ roots $\xi'$ which satisfy $|\xi'-\xi|\leq \epsilon |\xi|$, counted with multiplicity. %(In particular, $|\Arg(\xi'/ \xi)| \leq \epsilon$.) %  {\bf LEPE: $\leq \epsilon |\xi|$.}
\end{lemma}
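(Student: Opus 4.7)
The plan is to apply Rouché's theorem after an appropriate change of variables that turns $g(x,a)+h(x,a)$ into a polynomial in a new variable $u$ with a small perturbation parameter.

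First, I would substitute $u=xa^q$, which is legitimate since $a\neq 0$, and use $q\in\Z$ to keep $u$ a polynomial expression. Writing $g$ as in (\ref{eq:line}) and grouping distinct values among $a_1,\dots,a_n$ as $b_1,\dots,b_s$ with multiplicities $p_1,\dots,p_s$, we obtain
\[
g(x,a)=a_0 a^m\prod_{t=1}^s(u-b_t)^{p_t}=a_0 a^m G(u).
\]
The non-zero roots of $g(x,a)$ are exactly the points $\xi_t=b_t/a^q$ with multiplicity $p_t$, so the given $\xi$ corresponds to some $b_t$ and $p=p_t$. Next, every monomial $c_{j_1,j_2}x^{j_1}y^{j_2}$ of $h$ satisfies $j_2-qj_1>m$; since these are integers, $j_2-qj_1\geq m+1$. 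Rewriting $x^{j_1}y^{j_2}=u^{j_1}a^{j_2-qj_1}$ I can pull out $a^{m+1}$ and get
\[
h(x,a)=a^{m+1}\,\tilde h(u,a),
\]
where $\tilde h$ is a polynomial in $u$ and $a$. Consequently
\[
g(x,a)+h(x,a)=a^m\bigl(a_0 G(u)+a\,\tilde h(u,a)\bigr)=:a^mF(u,a),
\]
and the problem reduces to counting roots of $F(u,a)$ near $b_t$.

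Now I would invoke Rouché's theorem on the circle $|u-b_t|=\epsilon|b_t|$ in the $u$-plane. For $\epsilon$ small enough the circle encloses $b_t$ but excludes every $b_{t'}$ with $t'\neq t$, so on this circle $|a_0 G(u)|\geq C(\epsilon)>0$ for a constant depending only on $\epsilon,a_0$ and the $b_t$'s. On the same circle $|u|$ is bounded, hence $|\tilde h(u,a)|\leq M$ uniformly for $|a|\leq 1$, giving $|a\tilde h(u,a)|\leq M|a|$. Taking $|a|$ small enough relative to $\epsilon$ (namely $M|a|<C(\epsilon)$), Rouché yields that $F(u,a)$ has exactly $p_t$ zeros in the open disc $|u-b_t|<\epsilon|b_t|$, counted with multiplicity, matching the $p_t$ zeros of $a_0G(u)$ at $b_t$.

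Finally I would translate back via $x=u\,a^{-q}$, which is a bijective rescaling of the punctured plane and carries the disc $|u-b_t|<\epsilon|b_t|$ to $|x-\xi_t|<\epsilon|\xi_t|$. Since multiplicities are preserved by this bijection and by the common factor $a^m$, the $p=p_t$ roots found for $F(u,a)$ correspond to exactly $p$ roots $\xi'$ of $g(x,a)+h(x,a)$ with $|\xi'-\xi|<\epsilon|\xi|$, and the closed-disc version follows by slightly enlarging $\epsilon$. The main obstacle is purely bookkeeping: one has to verify that the lower bound on $|a_0 G(u)|$ and the upper bound on $|a\tilde h(u,a)|$ really can be made to satisfy the Rouché inequality simultaneously on the chosen circle, which is why the quantifier order is ``first $\epsilon$, then $a$.''
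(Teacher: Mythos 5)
Your argument is correct and is essentially the paper's own proof: both apply Rouch\'e's theorem on the circle of radius $\epsilon|\xi|$ around each root, with the strict upper-half-plane hypothesis (plus integrality of $q$ and the exponents) providing the decisive gap of one power of $a$ between the dominant part $g$ (of order $|a|^m$ on that circle) and the perturbation $h$ (of order $|a|^{m+1}$); your substitution $u=xa^q$ merely repackages the paper's bounds $\min_{x\in\partial\Omega(a)}|g(x,a)|\geq M_1|a|^m$ and $\max_{x\in\partial\Omega(a)}|h(x,a)|\leq M_2|a|^{m+1}$. One cosmetic remark: the closed-disc form of the conclusion needs no enlargement of $\epsilon$, since the strict Rouch\'e inequality already excludes roots on the boundary circle.
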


\begin{proof} This is an application of Rouch\'e's theorem. Let $0< \epsilon < 1$ be such that $\epsilon< |a_i-a_j|$ for every $a_i\not=a_j$.  
Given $a\in \C\setminus \{0\}$, every root of $g(x,a)$ is of the form  $\xi(a)=a_ka^{-q}$ for some $a_k\not=0$. Let us fix such an $a_k$. 
Let $\Omega(a)$ be the open ball with centre at $\xi(a)$ and radius $\epsilon|\xi(a)|$. Let $M_1$ be the minimum of $|g(x,1)|$ on $\partial \Omega(1)$. Hence $M_1>0$ and  
 \begin{equation}\label{eq:min} \min_{x\in \partial \Omega(a)}|g(x,a)| \geq M_1 |a|^m\,.\end{equation}
Furthermore, we claim that there exists $M_2$ independent on $a$ such that whenever $|a|<1$, 
\begin{equation}\label{eq:max} \max_{x\in \partial \Omega(a)}|h(x,a)| \leq M_2 |a|^{m+1}\,.\end{equation}
For let $h(x,y)=\sum_{(i, j)\in A}c_{i,j} x^{i}y^{j}$. By the assumption on $h$ (also recall that $q$ is an integer), we have that $j\geq qi+m+1$ whenever $c_{i,j}\not =0$. 
Furthermore, given $x\in \partial \Omega(a)$, we have $|x|\leq 2|\xi(a)|= 2|a_k| |a|^{-q}$. 
Hence, 
\begin{align*}|h(x,a)|\leq \sum_{(i, j)\in A}|c_{i,j}| |x|^{i}|a|^{j}\leq \sum_{(i, j)\in A}|c_{i,j}| |2 a_k|^{i}|a|^{- qi + j}\\
%\leq \sum_{(i,j)\in A}|c_{i,j}| |2a_k|^{i}|z|^{j- qi} 
 \leq |a|^{m+1}\sum_{(i,j)\in A}|c_{i,j}| |2a_k|^{i}\,, \end{align*}
which shows that (\ref{eq:max}) holds.

Inequalities (\ref{eq:min}) and (\ref{eq:max}) imply that for every $a$ with $|a|$ small enough,  $|h(x,a)|< |g(x,a)|$ holds for every $x\in  \partial \Omega(a)$. Rouch\'e's theorem then gives that $g(x,a)$ and $g(x,a)+h(x,a)$ contain the same number of roots in $ \Omega(a)$, counted with multiplicities.  
\end{proof}

We now make some observations about polytopes. 
Let $P\subseteq \R^2$ be a polytope with $k>2$ vertices. Then $P$ has $k$ edges. An edge $e$ will be called a \emph{lower edge}, if $P$ lies in the closed upper-half plane determined by the line passing through $e$.  Similarly, an \emph{upper edge} and the lower-half plane. Every edge $e$ is either a lower or an upper edge, unless $e$ is parallel to the $y$-axis. 
There can be at most two such edges, and we conclude that $P$ has either at least $(k-2)/2$ lower edges, or at least $(k-2)/2$ upper edges. 

Suppose that $P$ has $s$ lower edges $e_1,\dots, e_s$ with gradients $q_1,\dots, q_s$. Then the gradients are distinct and, assuming $q_1<\dots <q_s$,  $P$ contains vertices $(a_1,b_1),\dots, (a_{s+1}, b_{s+1})$ with $a_1<\dots <a_{s+1}$ such that every $e_i$ connects $(a_i,b_{i})$ and $(a_{i+1},b_{i+1})$. Furthermore, the projection of $P$ to the $x$-axis is the interval $[a_1,a_{s+1}]$.  

Let $f(x,y)= \sum_{i,j}a_{i,j}x^iy^j$ and let $e\subseteq \R^2$ be an edge of $\Newt(f)$ connecting vertices $(a,b)$ and $(a',b')$ with $a<a'$. 
We define 
\[ f_e(x,y):= \sum_{(i,j)\in e} a_{i,j} x^{i}y^{j}\,,\,\, f^*_e(x,y):= x^{-a} f_e(x,y)\,. \]

\begin{lemma}\label{lem:afterthought}
Let $L$ be the set of lower edges of $\Newt(f(x,y))$. Let $f^*(x,y):= \prod_{e\in L} f_e^*(x,y)$. 
Then for every $\phi\in [0,2\pi)$ and every $r>0$ sufficiently small, \[|B(f(x,re^{\imath \phi}))- B(f^*(x,e^{\imath \phi}))|<1\,.\]
\end{lemma}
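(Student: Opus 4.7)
The plan is to compare the arguments of the non-zero roots of $f(x,re^{\imath\phi})$ with those of $f^*(x,e^{\imath\phi})$, and then invoke the fact that the bias $B$ is Lipschitz in those arguments: perturbing each of the $n$ root-arguments by at most $2\pi\epsilon$ changes $B$ by at most $2\epsilon n$ (by enlarging and shrinking every sector by $\epsilon$, exactly as in the footnote of the main theorem). So if $\epsilon$ can be made arbitrarily small (by choosing $r$ correspondingly small), the target bound $<1$ follows by taking $\epsilon < 1/(4n)$.

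First I would reduce to the case where all lower edges of $\Newt(f)$ have integer gradient, via the substitution $y = z^d$ with $d$ a common denominator of the edge slopes: this multiplies all lower-edge gradients by $d$, and under $\psi = \phi/d$ one checks $\tilde f(x, r^{1/d}e^{\imath\psi}) = f(x, re^{\imath\phi})$ and $\tilde f_{\tilde e}^*(x, e^{\imath\psi}) = f_e^*(x, e^{\imath\phi})$, reducing the claim for $f$ to the analogous claim for $\tilde f$.

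So assume the lower-edge gradients $q_1 < \dots < q_s$ are integers, with $e_i$ joining consecutive lower-vertices $(a_i,b_i), (a_{i+1}, b_{i+1})$. One directly computes $f_{e_i}(x,y) = a_{a_{i+1},b_{i+1}}\, x^{a_i}\, y^{b_i}\prod_k (xy^{q_i} - u_{i,k})$ for non-zero $u_{i,k}\in\C$; this matches the form (\ref{eq:line}) (by absorbing $x^{a_i}$ into $a_i$ extra $xy^{q_i}$-factors with root $0$), and $h_{e_i} := f - f_{e_i}$ has $\Newt(h_{e_i})$ strictly above the line of $e_i$ because $e_i$ is a lower edge. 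The non-zero roots of $f_{e_i}(x,re^{\imath\phi})$ are $u_{i,k}\,r^{-q_i}e^{-\imath q_i\phi}$ with arguments $\arg(u_{i,k}) - q_i\phi \pmod{2\pi}$, and these are precisely the arguments of the non-zero roots $u_{i,k}e^{-\imath q_i\phi}$ of $f_{e_i}^*(x,e^{\imath\phi})$. Lemma \ref{lem:rouche}, applied to the decomposition $f = f_{e_i}+h_{e_i}$, then gives for every $\epsilon > 0$ and every sufficiently small $r > 0$ (depending on $\epsilon,\phi$) a multiplicity-preserving assignment of each of the above roots to a root of $f(x,re^{\imath\phi})$ within relative distance $\epsilon$, hence at argument distance at most $2\epsilon$.

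The last step is a counting argument that promotes these edge-wise assignments to a global bijection. For small $r$ the coefficient of $x^{a_{s+1}}$ in $f(x,re^{\imath\phi})$ is dominated by its lowest-order-in-$r$ term, the vertex contribution $a_{a_{s+1},b_{s+1}}e^{\imath b_{s+1}\phi}r^{b_{s+1}}$, which is non-zero; so $f(x,re^{\imath\phi})$ has exactly $a_{s+1}-a_1$ non-zero roots, matching $\sum_i (a_{i+1}-a_i)$, the number of non-zero roots of $f^*(x,e^{\imath\phi})$. Moreover, Lemma \ref{lem:rouche}'s balls for distinct edges sit on disjoint annular scales $|x|\sim r^{-q_i}$ (since the $q_i$ are distinct), so for $r$ small they are pairwise disjoint. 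The matching count then forces the edge-wise assignments to fit together into a full multiplicity-preserving bijection between non-zero roots of $f(x,re^{\imath\phi})$ and of $f^*(x,e^{\imath\phi})$ with corresponding arguments within $2\epsilon$, and the Lipschitz estimate on $B$ finishes the proof. The main obstacle I expect is this last disjointness/exhaustion bookkeeping --- ensuring no non-zero root of $f(x,re^{\imath\phi})$ escapes the edge-wise balls and no ball is ``shared'' between two edges --- but the clean scale separation $r^{-q_i}$ versus $r^{-q_j}$ as $r\to 0$ makes it go through.
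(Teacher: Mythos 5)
Your proposal is correct and follows essentially the same route as the paper's proof: reduce to integer lower-edge gradients, factor each edge polynomial so its roots sit at scale $r^{-q_i}$ with $r$-independent arguments, apply Lemma \ref{lem:rouche} edge by edge, use the scale separation and a degree/root count to assemble a global $\epsilon$-matching, and finish with the near-Lipschitz behaviour of $B$ under small argument perturbations. The only cosmetic difference is that the paper normalizes to $x\nmid f$ so that $\deg f=\deg f^*$, whereas you verify the equality of the numbers of non-zero roots directly from the vertex coefficients; both amount to the same bookkeeping.
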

 
\begin{proof} Without loss of generality, assume that $x$ does not divide $f$ and that the lower edges have integer gradients. Otherwise, we can divide by $x$ and replace $y$ by a suitable power of $y$. For $a\in \C\setminus \{0\}$, $\Arg(a)$ denotes the unique $\theta \in (-\pi,\pi]$ with $ a= |a|e^{\imath \theta}$. 

Let $n:= \deg(f(x,y))$, where from now on $\deg$ will denote the degree with respect to the \emph{$x$-variable}.
Let $e_1,\dots, e_{s}$ be the lower edges of $\Newt(f(x,y))$ with gradients $q_1 <\dots <q_s$. As in the above discussion, we have vertices $(a_1,b_1),\dots, (a_{s+1}, b_{s+1})$ so that $e_i$ connects $(a_i,b_{i})$ and $(a_{i+1},b_{i+1})$, and $a_1=0, a_{s+1}=n$. Observe that 
$x$ does not divide $f^*_{e_i}$ and $\deg(f^*_{e_i})= a_{i+1}-a_i$. Hence,
\[\deg(f^*)= \sum_{i=1}^s \deg(f^*_i)= \sum_{i=1}^s (a_{i+1}-a_i)= a_{s+1}-a_0= 
n\,. \]

Let $\phi$ be given and $\epsilon>0$ be sufficiently small. %Let $\xi_1,\dots, \xi_n$ be the roots of $f^*(x,e^{\i\phi})$. 
We claim the following: given $r$ sufficiently small, %the roots $\xi'_1(r),\dots, \xi'_n(r)$ of $f(x,re^{\i\phi})$ 
there is a bijection between the roots of $f^*(x,e^{\imath \phi})$ and  $f(x,re^{\imath \phi})$, so that any two corresponding roots $\xi$, $\xi'$  
satisfy $|\Arg(\xi'/\xi)| < \epsilon$.  %for every $i$, and a suitable permutation $p$ of $[n]$. 
We will call such a bijection an \emph{$\epsilon$-matching}.
This clearly implies that $B(f(x,re^{\imath \phi}))$ and $B(f^*(x,e^{\imath \phi}))$ get arbitrarily close to each other (for our purposes, it is enough to set $\epsilon=1/2n$).

The claim is an application of Lemma \ref{lem:rouche}. Let $d_i:= \deg(f^*_{e_i})$.  
 %Let $\xi_{1,i}(r)\dots, \xi_{d_i,i}(r)$ be the roots of $f*_{e_i}(x,re^{\i\phi})$. 
 %As in (\ref{eq:line}), 
 We can factor each $f_{e_i}^*$ as  
\begin{equation}\label{eq:factor}f_{e_i}^*(x,y)= c_0 y^{b_i}\prod_{j=1}^{d_i}(xy^{q_i}- c_{j})\,, \end{equation}
This means that the roots of $f_{e_i}^*(x,re^{\imath \phi})$ %are of the form $\xi_{j,i}(r)= r^{-q_i}\xi_{j,i}(1)$ and hence they 
lie in the disc    
\[D_i(r)=  \{ z\in \C: m_i r^{-q_i} \leq |z|\leq m_i' r^{-q_i} \}\,,\]
where $0< m_i\leq m_i'$ are independent of $r$. 
%Moreover, $\Arg(\xi_{j,i}(r))= \Arg(\xi_{j,i}(1))$ is independent on $r$. 
By the definition of $f_{e_i}$, the Newton polytope of $f(x,y)-f_{e_i}(x,y)$ lies in the strict upper-half plane determined by the edge $e_i$. From Lemma \ref{lem:rouche}, we conclude that for every $r>0$ small enough, $f(x,re^{\imath \phi})$ contains at least $d_i$ roots %$\xi'_{1,i}(r)\dots, \xi'_{d_i,i}(r)$ 
in the disc 
\[D_i'(r)= \{ z\in \C: m_i(1-\epsilon) r^{-q_i} <|z|\leq m_i' (1+\epsilon)r^{-q_i}\}\,\]
and, moreover, there is an $\epsilon$-matching between $d_i$ of these roots and the roots of $f^*_{e_i}(x,re^{\imath \phi})$.
%which satisfy $|\Arg(\xI'(r)_{j,i}) - \Arg(\xi_{i,j}(r))|<\epsilon$.  
As $r$ approaches zero, the discs $D_i'(r)$ and $D_j'(r)$ become disjoint for distinct $i$ and $j$. Since $\deg(f)=\deg(f^*)$, this means that there is an $\epsilon$-matching between the roots of $f(x,re^{\imath \phi})$ and  $f^*(x,re^{\imath \phi})$. To conclude the claim, observe  from (\ref{eq:factor}) that the arguments of the roots of $f^*(x,re^{\imath \phi})$ do not depend on $r$.
\end{proof}

\begin{proof}[{\bf Proof of Theorem \ref{thm:Newt}}] Let $f(x,y)$ be such that $\Newt(f(x,y))$ has $k$ vertices. Without loss of generality, we will assume that  $\Newt(f)$ has $s\geq (k-2)/2$ lower edges. For otherwise, take the polynomial $y^m f(x,y^{-1})$ for $m$ sufficiently large.   

Using Lemma \ref{lem:afterthought}, it is enough to show 
 there exists $\phi\in [0,2\pi)$ with \begin{equation}\label{eq:claim}B(f^*(x,e^{\imath \phi}))\geq \Omega(\sqrt{s})\,.\end{equation} 
If the gradients of the lower edges are integers, we can factor $f^*$ as (\ref{eq:poly}) and conclude (\ref{eq:claim}) from Theorem \ref{thm:run0} as in the discussion in  Section \ref{sec:connection}. If the gradients are not integers, take instead $f^*(x,y^m)$ for a suitable $m$. %Hence, by Lemma \ref{lem:afterthought}, there exists $r>0$ so that $B(f(x,re^{\imath \phi}))\geq \Omega(\sqrt{s})$.    
\end{proof}

\subsection{Proof of Theorem \ref{thm:cor}}\label{sec:cor}
 As already remarked, a weaker version of Theorem \ref{thm:cor} follows from Theorem \ref{thm:Newt}. However, the following proof of the full version is self-contained, though similar to that of Theorem \ref{thm:Newt}.   

An analogue of Lemma \ref{lem:rouche} is the following: 
\begin{lemma}\label{lem:rouche2} Let $g(x,y)$ be a \emph{real} polynomial as in (\ref{eq:g}) with $c_{n_1}c_{n_2}<0$ and $q\in \Q$. Let $h(x,y)$ be a real polynomial such that $\Newt(h)$ lies strictly above the line $\ell$. Then there exist $0<d<d'$ such that for every $0< r$ sufficiently small, $g(x,r)+h(x,r)$ contains a root in the interval $(dr^{-q},d'r^{-q})$. 
\end{lemma}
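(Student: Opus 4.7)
The plan is to convert the statement into a real intermediate value theorem argument, in much the same spirit as Lemma \ref{lem:rouche} converts its complex analogue into Rouché's theorem. The first step is the monomial substitution $x = z r^{-q}$: since every exponent of $g$ lies on the line $\ell$, this collapses $g(zr^{-q}, r)$ to a fixed univariate real polynomial scaled by $r^m$. Explicitly,
\[ g(zr^{-q}, r) = r^m \sum_{j=n_1}^{n_2} c_j z^j = r^m p(z), \]
where $p$ is independent of $r$.

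Next I would use the sign hypothesis $c_{n_1} c_{n_2} < 0$ to locate a sign change of $p$ on the positive real axis. Restricted to $z > 0$, the asymptotics $p(z) \sim c_{n_1} z^{n_1}$ as $z \to 0^+$ and $p(z) \sim c_{n_2} z^{n_2}$ as $z \to \infty$ have opposite signs, so the real IVT produces $0 < d < d'$ with $p(d) p(d') < 0$. Both $|p(d)|$ and $|p(d')|$ are strictly positive constants that do not depend on $r$.

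The third step is to bound the perturbation $h(x, r)$ uniformly along the curve $x = zr^{-q}$ for $z \in [d, d']$, mimicking the monomial-by-monomial estimate in the proof of Lemma \ref{lem:rouche}. Because $\Newt(h)$ lies strictly above $\ell$, every monomial $c_{i,j} x^i y^j$ of $h$ satisfies $j \geq qi + m + 1$, so for $0 < r \leq 1$ one gets $|c_{i,j}\,(zr^{-q})^i\,r^j| \leq |c_{i,j}|(d')^i r^{m+1}$, and summing yields $|h(zr^{-q}, r)| \leq C\,r^{m+1}$ for some $C = C(h, d')$. Combining, the value of $g(x,r) + h(x,r)$ at $x = dr^{-q}$ equals $r^m p(d) + O(r^{m+1})$ and at $x = d'r^{-q}$ equals $r^m p(d') + O(r^{m+1})$, so for all sufficiently small $r > 0$ the two endpoint values have opposite signs. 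Continuity of $g + h$ as a real-valued function of $x$ and the IVT then produce a real root in the open interval $(dr^{-q}, d'r^{-q})$.

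The only delicate point is the second step: one must arrange an honest sign-crossing zero of $p$ rather than merely a zero of even multiplicity. This is automatic once the opposite asymptotic signs are established, since a continuous real function with opposite signs at two points must cross zero, sign-changing, between them. Everything else is a direct real-variable adaptation of the Rouché-style bookkeeping from Lemma \ref{lem:rouche}, with the key simplification that the leading complex geometry is replaced by a single IVT on $\R$.
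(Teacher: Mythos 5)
Your proposal is correct and follows essentially the same route as the paper: find $0<d<d'$ with $g(d,1)g(d',1)<0$, use the exact scaling $g(zr^{-q},r)=r^m g(z,1)$, bound $|h|$ at the endpoints by a strictly higher power of $r$ as in Lemma \ref{lem:rouche}, and conclude by the intermediate value theorem. The only bookkeeping nuance is that for $q\in\Q$ non-integral the exponent gap is $j\geq qi+m+\delta$ for some $\delta>0$ rather than $+1$, but the bound $|h(zr^{-q},r)|\leq C r^{m+\delta}$ still suffices, so the argument stands.
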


\begin{proof} The assumption $c_{n_1}c_{n_2}<0$ guarantees the existence of $0<d<d'$ with $g(d,1)g(d',1)<0$. This means that for every $0<r$, also $g(dr^{-q},r)g(d'r^{-q},r)<0$. 
As in the proof of Lemma \ref{lem:rouche}, it can be shown that for $r>0$ sufficiently small,  $|h(dr^{-q},r)|< |g(dr^{-q},r)|$ and $|h(d'r^{-q},r)|< |g(d'r^{-q},r)|$. This shows that 
$g(x,r)+h(x,r)$ has different signs  on the endpoints of the interval  $[dr^{-k},d'r^{-k}]$, and the interval must contain a real root. 
\end{proof}

The following lemma is a substitute for Theorem \ref{thm:run0}. 
Let $r_1,\dots, r_k$ be a sequence of real numbers. We define
\[V(r_1,\dots, r_k):= |\{i\in [ k-1]: r_ir_{i+1}<0\}|\,,\] 
 the number of sign variations in the sequence.
 
 \begin{lemma}\label{lem:cos} Let $\alpha_1,\dots, \alpha_k\in [0,2\pi)$ and let $n_1,\dots, n_k$ be positive integers such that $n_i\not=n_{i+1}$ for every $i\in [k-1]$. Then there exists $\phi\in [0,2\pi)$ such that\footnote{We are not trying to optimize the constant; a different argument would give an improvement of $(k-1)/6$.}
 $V(\cos(\alpha_1+\phi n_1),\dots, \cos(\alpha_k+\phi n_k))\geq (k-1)/8$.
 \end{lemma}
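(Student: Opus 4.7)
My approach is a straightforward averaging argument. Draw $\phi$ uniformly from $[0,2\pi)$ and set
\[X_i(\phi):=\mathbf{1}\bigl[\cos(\alpha_i+\phi n_i)\cos(\alpha_{i+1}+\phi n_{i+1})<0\bigr],\qquad i\in[k-1],\]
so that $V(\cos(\alpha_1+\phi n_1),\dots,\cos(\alpha_k+\phi n_k))=\sum_{i=1}^{k-1}X_i(\phi)$. By linearity of expectation, the lemma reduces to the pointwise-in-$i$ estimate $\EE_\phi X_i\geq 1/8$ for each $i\in[k-1]$, as this forces some $\phi\in[0,2\pi)$ to realize $V\geq (k-1)/8$.

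\textbf{Key step: moments of a product of cosines.} Fix $i$ and apply the product-to-sum identity to obtain
\[f(\phi):=2\cos(\alpha_i+\phi n_i)\cos(\alpha_{i+1}+\phi n_{i+1})=\cos\bigl(c_{-}+(n_i-n_{i+1})\phi\bigr)+\cos\bigl(c_{+}+(n_i+n_{i+1})\phi\bigr),\]
where $c_\pm:=\alpha_i\pm\alpha_{i+1}$, so that $X_i=\mathbf{1}[f<0]$. The hypothesis $n_i\neq n_{i+1}$ makes $n_i-n_{i+1}$ a nonzero integer, while $n_i,n_{i+1}\geq 1$ forces $|n_i-n_{i+1}|<n_i+n_{i+1}$. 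Thus the two cosines appearing in $f$ have distinct nonzero integer frequencies; each has mean zero on $[0,2\pi]$ and they are $L^2$-orthogonal (the cross product again splits, by product-to-sum, into two cosines whose frequencies $\pm 2n_i$ and $\mp 2n_{i+1}$ are nonzero). Consequently $\EE_\phi f=0$ and $\EE_\phi f^2=1$, while $|f(\phi)|\leq 2$ pointwise.

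\textbf{From moments to probability.} Set $r:=\EE[f\cdot\mathbf{1}_{f\geq 0}]=\EE[|f|\cdot\mathbf{1}_{f<0}]$, the equality coming from $\EE f=0$. Since $|f|\leq 2$,
\[\EE[f^2\,\mathbf{1}_{f<0}]\leq 2\,\EE[|f|\,\mathbf{1}_{f<0}]=2r\qquad\text{and}\qquad \EE[f^2\,\mathbf{1}_{f\geq 0}]\leq 2\,\EE[|f|\,\mathbf{1}_{f\geq 0}]=2r,\]
so $1=\EE f^2\leq 4r$, giving $r\geq 1/4$. Combined with the trivial bound $r\leq 2\Pr[f<0]$ (again from $|f|\leq 2$), this forces $\Pr[f<0]\geq 1/8$, i.e.\ $\EE_\phi X_i\geq 1/8$. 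Summing over $i\in[k-1]$ yields $\EE_\phi V\geq (k-1)/8$, and the lemma follows.

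\textbf{Main obstacle.} The averaging reduction and the moment computation are routine; the one delicate point is the elementary variance-to-probability inequality producing the constant $1/8$, whose validity depends entirely on knowing that $f$ is an honest sum of two nonconstant orthogonal cosines. This is precisely where the assumption $n_i\neq n_{i+1}$ is used: without it, $f$ could collapse to $1+\cos(2(\alpha+\phi n))$ (a nonnegative function), and the bound $\Pr[f<0]\geq 1/8$ would be false.
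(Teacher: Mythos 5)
Your proposal is correct and follows essentially the same route as the paper: both average over a uniformly random $\phi$, reduce to showing each pair satisfies $\Pr[\cos(\alpha_i+\phi n_i)\cos(\alpha_{i+1}+\phi n_{i+1})<0]\geq 1/8$, and obtain this from the first and second moments of the cosine product together with its sup-norm bound (your computation via product-to-sum and the $|f|\leq 2$ normalization is just a rescaled version of the paper's $\EE f_i=0$, $\EE f_i^2=1/4$, $|f_i|\leq 1$ chain). The final linearity-of-expectation step matches the paper as well, so there is nothing to fix.
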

 \begin{proof} Pick a random $x\in [0,1]$. Let $f_{i}(x):= \cos(\alpha_i+ 2\pi  n_i x) \cos(\alpha_{i+1}+ 2\pi  n_{i+1} x)$. 
 We claim that \[ \Pr[f_i(x)< 0]\geq 1/8\,.\]
This can be seen as follows. An easy calculation shows that $\int_0^1 f_i(x)dx=0$ and $\int_0^1 f_i(x)^2dx=1/4$. 
Let $A:= \{x\in [0,1]: f_i(x)<0\}$. Then 
\begin{align*} \frac{1}{4} =& \int_0^1 f_i(x)^2dx \\ \leq &\int_0^1 |f_i(x)|dx= \int_{A} |f_i(x)|dx+ \int_{[0,1]\setminus A} |f_i(x)|dx= 2 \int_{A} |f_i(x)|dx \leq 2|A|\,, 
\end{align*}
 which shows that $|A|\geq 1/8$. 
 
 Let $\chi_i(x)\in \{0,1\}$ be the indicator function of the event that $f_i(x)< 0$. Then the expectation of $\chi_1(x)+\dots+\chi_{k-1}(x)$ is at least $(k-1)/8$. Hence there exists $x\in [0,1)$ so that 
 $f_i(x)<0$ holds for at least $(k-1)/8$ of the $i$'s.  
  \end{proof}

 \begin{proof}[\bf Proof of Theorem \ref{thm:cor}] %The proof is a rehash of the proof of Theorem \ref{thm:Newt}, and so we skip some details. We use the notation introduced in the proof of Theorem \ref{thm:Newt}.% In addition to the simplifying assumptions already made, we will assume that in the sequence of lower edges $b_1,\dots, b_{s'}$ are distinct, where $s'\geq s/2$. This is because there can be at most one $i$ with $b_i= b_{i}$. If 
%Write $f(x,y)= \sum_{j_1,j_2}r_{j_1,j_2}e^{\i \alpha_{j_1,j_2}}x^{j_1}y^{j_1}$. 
 As in the proof of Theorem \ref{thm:Newt}, we can assume that the polytope has $s\geq (k-2)/2$ lower edges. Let $e_1,\dots,e_s$ be such edges with gradients $q_1,\dots, q_s$, where $e_i$ connects 
 $(a_i,b_i)$ and $(a_{i+1},b_{i+1})$ with $a_i<a_{i+1}$.  Let $r_i e^{\imath  \alpha_i}$ be the coefficient of $x^{a_i}y^{b_i}$ in $f$ (where $r_i>0$, $\alpha_i\in [0,2\pi)$).

 %Then  $f_{e_i}$ can be written as $r_{i} e^{\i\alpha_i} x^{a_i}y^{b_i} + r_{i+1} e^{\i\alpha_{i+1}} x^{a_{i+1}}y^{b_{i+1}}+u_i$, where $\Newt(u_i)$ is contained strictly between $(a_i,b_i)$ and $(a_{i+1},b_{i+1})$, and $0<r_1,\dots, r_{s+1}$ and $\alpha_1,\dots,\alpha_s\in [0,2\pi)$. 
 Given $a= r e^{\imath\phi}$, we can write $ \re(f_{e_i}(x, a))$ as
 \[ 
 \cos({\alpha_i+ b_i \phi}) r_{i}r^{b_i}  x^{a_i} + \cos(\alpha_{i+1}+ b_{i+1}\phi)r_{i+1}r^{b_{i+1}}  x^{a_{i+1}}+ \re(u_i(x,a)) \,,
 %\sum_{j= a_i}^{a_{i+1}} r_{i,j}r_j\cos(\alpha_{i,j})+\phi j) x^i\,. 
 \]
 where $u_i(x,y)$ is a polynomial such that $\Newt(u_i(x,y))$ lies on  the line strictly between the points $(a_i,b_i)$ and $(a_{i+1},b_{i+1})$. 
 Let $T(\phi)$ be the sequence $\cos(\alpha_{1}+\phi b_1),\dots, \cos(\alpha_{s+1}+\phi b_{s+1})$.
  Note that $b_i\not=b_{i+1}$ holds for every $i\in [s]$, with at most one exception. By the previous lemma, there exists a $\phi$ such that $V(T(\phi))\geq (s-1)/8$ (this ''one exception" compensated by the sequence having length $s+1$). Fix such a $\phi$. 
 Given an $i$ with $\cos(\alpha_{i}+\phi b_i)\cos(\alpha_{i+1}+\phi b_{i+1})< 0$, we can apply 
  Lemma \ref{lem:rouche2}, to conclude that $\re(f(x,re^{\imath \phi}))$ has a root in the interval $(d_i r^{-q_i}, d_i' r^{-q_i})$ for every $r$ sufficiently small.  As $r$ approaches zero, the intervals corresponding to different $i$'s are disjoint (the gradients $q_i$ are distinct).  This gives that $\re(f(x,re^{\imath \phi}))$ has at least $(s-1)/8$ distinct real roots for $r$ sufficiently small.   
 \end{proof}
 
 \paragraph{Acknowledgement} The author thanks B. Green, P. Pudl\'ak, M. Rojas, and the anonymous referees for their comments.

 \bibliographystyle{plain}
%\bibliography{myrefs} 

\end{document}